\theoremstyle{plain}
\newtheorem{lemma}{Lemma}
\newtheorem{cor}{Corollary}
\newtheorem{defn}{Definition}
\newtheorem{prop}{Proposition}
\newtheorem{remark}{Remark}
 \newcommand{%
    \beginpgfgraphicnamed{./-pdf}%
    \input{./.tex}%
    \endpgfgraphicnamed%
  }[1]{%
    \beginpgfgraphicnamed{./#1-pdf}%
    \input{./#1.tex}%
    \endpgfgraphicnamed%
  }
\tikzset{cross/.style={cross out, draw=black, minimum size=2*(#1-\pgflinewidth), inner sep=0pt, outer sep=0pt}, cross/.default={1pt}}
\newcommand*{\BigLat}{2}%
\newcommand*{\NSmCell}{4}%
\newcommand*{\NBigCellhoriz}{1}%
\newcommand*{\NBigCellvert}{4}%
\pgfmathsetmacro{\SmLat}{\BigLat / \NSmCell}%
\pgfmathsetmacro{\NSmCellhoriz}{\NBigCellhoriz*\NSmCell}%
\pgfmathsetmacro{\NSmCellminus}{\NSmCell-1}%
\pgfmathsetmacro{\NSmCellvert}{\NBigCellvert*\NSmCell}%
\pgfmathsetmacro{\NSmCellsq}{\NSmCell*\NSmCell}%
\tikzset{cross/.style={cross out, draw=black, minimum size=2*(#1-\pgflinewidth), inner sep=0pt, outer sep=0pt}, cross/.default={1pt}}
\renewcommand*{\L}{10}%
\newcommand*{\La}{2.5}%
\newcommand*{\Lb}{5}%
\newcommand*{\N}{40}%
\newcommand*{\Ma}{10}%
\pgfmathsetmacro{\DELTA}{\L/\N}%
\newcommand*{\LocList}{1, 23, 35 }
\newcommand*{\LocListAliasa}{1,  3, 5 } 
\newcommand*{\LocListAliasb}{1, 3, 15 } 
\newcommand*{\LocListSearchb}{1, 3, 5, 11, 13, 15 }
\newcommand*{\LocListSearchc}{1, 3, 15, 21, 23, 35}
\newcommand*{\LocListAliasShuffled}{13,15,19}
\newcommand*{\LocListSearchShuffled}{33,35,39}
\newcommand*{\LocListReduced}{0,1,2,3,4,5,6,7,8,9}
\author{
  Letourneau, Pierre-David\thanks{Corresponding author:
    letourneau@reservoir.com} \and Langston, M. Harper \and  Meister, Benoit \and Lethin, Richard \thanks{Reservoir Labs, 632 Broadway Suite 803, New York, NY 10012}}
\title{A sparse multidimensional FFT for real positive
  vectors\thanks{This research was developed with funding from the
    Defense Advanced Research Projects Agency (DARPA). The views,
    opinions and/or findings expressed are those of the author and
    should not be interpreted as representing the official views or
    policies of the Department of Defense or the U.S. Government.
    Patent Pending. (U.S. Patent Application No. 62/286,732)} } 
\begin{document}
\maketitle
\begin{abstract}
We present a sparse multidimensional FFT (sMFFT) randomized algorithm
for real positive vectors. The algorithm works in any fixed dimension, 
requires ($\mathcal{O}\left( R \log(R)
\log\left( N \right )  \right )$) samples and runs in
$\mathcal{O}\left( R \log^2(R) \log\left( N \right )  \right
)$ complexity (where $N$ is the total size of the vector in $d$ dimensions and $R$ is the number
of nonzeros). It is stable to low-level noise and exhibits an exponentially
small probability of failure. 
\end{abstract}
\newpage
\section{Introduction}
The Fast Fourier Transform (FFT) algorithm  reduces the computational
cost of computing the Discrete Fourier Transform (DFT) of a general
complex $N$-vector from $\mathcal{O}(N^2)$ to $\mathcal{O}(N \log(N)
)$. Since its popularization in the 1960s \cite{cooley1965algorithm},
the FFT algorithm has played a crucial role in
multiple areas including scientific computing \cite{darden1993particle},
signal processing \cite{mallat1999wavelet} and computer science
\cite{frigo1998fftw}. In the general case, such scaIing is at most a factor $\log(N)$ from optimality. In more restricted cases however, such as when the vector to 
be recovered is sparse, it is possible to significantly improve on the
latter.

Indeed, the past decade or so has seen the design and study of various
algorithms that can compute the DFT of sparse vectors using significantly less
time and  measurements than traditionally required
\cite{pawar2013computing,mansour1995randomized,lawlor2013adaptive,kushilevitz1993learning,iwen2010combinatorial,goldreich1989hard,gilbert2002near,akavia2003proving,boufounos2012s,akavia2010deterministic,plonka2016sparse,plonka2015deterministic,gilbert2005improved,indyk2014sample,indyk2014sparse,indyk2014nearly,ghazi2013sample,hassanieh2012nearly,hassanieh2012simple,akavia2014deterministic,iwen2013improved}. That
is, if $f$ is an $N\times 1$ vector corresponding to the DFT of an
$N\times 1$ vector $\hat{f}$ containing at most $R \ll N$ nonzero elements,
it is possible to recover $\hat{f}$ using significantly fewer samples than the
traditional ``Nyquist rate'' ($\ll \mathcal{O}(N)$) 
and in computational complexity much lower than that of the FFT ($\ll
\mathcal{O}(N\log(N))$). These schemes are generally referred to as ``sparse Fast Fourier
Transform'' (sFFT) algorithms, and they generally fall within two categories: 1) deterministic\cite{akavia2010deterministic,akavia2014deterministic,iwen2010combinatorial,iwen2013improved,lawlor2013adaptive}
versus randomized algorithms, and 2) exactly versus approximately sparse \cite{mansour1995randomized,gilbert2002near,
gilbert2005improved,indyk2014sample,indyk2014sparse,indyk2014nearly,ghazi2013sample,hassanieh2012nearly,hassanieh2012simple,akavia2014deterministic,iwen2013improved};  for a periodic signal,
\begin{equation}
\label{1DsFFT:orig_func}
f(x)  =  \sum_{j=0}^{N-1} e^{-2 \pi i \, x \, j}\, \left ( \hat{f}_j + \eta \hat{\nu}_j \right )  \, ,
\end{equation}
we say that it is \emph{$R$-sparse} if the support of the spectrum $\hat{f}$, i.e., the set of indices for which $\hat{f}_j \not = 0$, has cardinality smaller than or equal to $R$. Recovery of the location and magnitude of the nonzero coefficients is referred to as the \emph{exactly} $R$-sparse FFT problem if $\eta = 0$ and as the \emph{noisy} $R$-sparse FFT problem if $0 < \eta$ is relatively small. When the signal is not $R$-sparse but nonetheless compressible \cite{candes2006stable}, the problem of finding the best $R$-sparse approximation is referred to as the \emph{approximately} sparse FFT problem and takes the following form: given an accuracy parameter $\epsilon$, find $\hat{f}^*$ such that, 
\begin{equation}
\label{l2l2}
|| \hat{f}^* - \hat{f} ||_a \leq (1+\epsilon)\,  \min_{ \hat{y} \, : \,  \mathrm{R-sparse} } || \hat{y} - \hat{f} ||_b + \eta || \hat{f} ||_c
\end{equation}
where $|| \cdot ||_a$, $|| \cdot ||_b$ and $|| \cdot ||_c$ are $\ell_p$-norms (generally $p=1$ or $2$).

Of the former category, randomized algorithms have had the most 
success in practice thus far; although some deterministic algorithms do exhibit quasilinear complexity in $R$ and polylogarithmic complexity in $N$, the algorithmic constant and the exponents are often so large that the methods are only competitive when $N$ is impractically large or when the sparsity satisfies stringent conditions. On the other hand, both the algorithmic constant and the complexity of randomized sparse FFTs are in general much smaller. The most recent results on the topic are shown in Table \ref{tab:alg_prop}.

 Among approximate sparse FFT algorithms, the best complexities achieved so far have been obtained by \cite{indyk2014nearly} and \cite{ghazi2013sample} in the average and worst-case scenario respectively. Both are randomized algorithms but use different techniques to locate the support of the important frequencies. \cite{ghazi2013sample} uses a technique known as orthogonal frequency division multiplexing (OFDM) where the problem is ``lifted" from 1D to 2D and sparsity along each column/row is leveraged, whereas \cite{indyk2014nearly} uses binning techniques (shifting, shuffling and filtering) to produce a support-locating estimator. Both methods ultimately rely on estimating some complex phase containing information with regard to the support of the signal.

Among exact sparse FFT algorithms, the best complexities achieved so far where obtained by \cite{iwen2010combinatorial} in the worst case, and  \cite{lawlor2013adaptive} (deterministic) and \cite{ghazi2013sample} (randomized w/ constant probability of success) both in the average case. We note however that \cite{iwen2010combinatorial} can be very unstable, even to low-level noise, being based upon the Chinese  Remainder Theorem. As for \cite{lawlor2013adaptive}, its performance degrade significantly when a worst case scenario is considered, becoming of the same order as \cite{iwen2010combinatorial}. In addition, to achieve such low complexity, \cite{lawlor2013adaptive} makes use of adaptive sampling, by which samples for a subsequent iterations are chosen based on information available at the current iteration in some. By contrast, our algorithm uses non-adaptive sampling. These results show an interesting trade-off: in an $R$-sparse worst-case scenario, algorithms exhibiting the best sampling/computational complexity often possess scaling much inferior than their counter-part per regards to the probability of failure $p$, and vice-versa. Such characteristics can be detrimental in settings where high-performance is needed and failure cannot be tolerated.
 
Finally, we also observe from Table \ref{tab:alg_prop} that whenever the algorithm is generalizable to multiple dimensions, the algorithmic constant exhibits a scaling which is at least exponential in the dimension $d$.

In this light, this work presents an algorithm for treating the noisy $R$-sparse FFT problem when the spectrum $\hat{f}$ is nonnegative\footnote{Complex signals are treated in a sister paper: \emph{``A sparse multidimensional fast Fourier transform for complex vectors.''}. The complex case exhibits the same scaling as the real positive case except that $C_{\mathrm{complex}} (p)  \sim  \log^2 (p^{-1} )$. See Appendix \ref{complexvec} for a high-level description of the approach.}. The proposed algorithm possesses a low sampling complexity of $\tilde{\mathcal{O}}\left( R \log(R)  \log \left( N \right )  \right )$ and a computational complexity of $\tilde{\mathcal{O}}\left( R \log^2(R) \log \left( N \right )  \right )$ (where $\tilde{O}$ indicates the presence of $\log^c(\log(\cdot))$ factors), and further scales like $\log(p)$ with respect to the probability of failure $p$, thus alleviating the issues underlined above. In addition, our method possesses constant scaling with respect to dimension, i.e., in $d$ dimensions with $N=M^d$ unknowns, the scaling is of the form $\log(N) = d \log(M)$ without further dependence. The algorithm is also non-adaptive.

The proposed method uses tools similar to those introduced in \cite{indyk2014nearly}, especially its shuffling and filtering techniques. However, our support-locating scheme does not rely on sample-based phase estimation; it works in Fourier space rather than in sample space. Furthermore, rather than proceed simultaneously in locating and computing the value the coefficients of the``heavy" frequencies, we completely separate both processes. Intuitively, locating the support without attempting to estimate values accurately is a simpler task that can be carried out more rapidly and with higher probability of success than attempting to perform both tasks at once. As for the treatment of the high-dimensional problem, we demonstrate how it is in fact possible to reduce any $d$-dimensional problem to a $1$-dimensional problem without any overhead cost through the use of rank-$1$ lattices. 

\begin{center} 
{ 
\begin{table}[H]
\begin{center}
\begin{tabular}{|c|c|c|c|c|c|c|} \hline 
Reference &  Time & Samples & $C(p)$ & $C(d)$ & Type & Model  \\ \hline
\cite{indyk2014nearly} & $\tilde{\mathcal{O}} \left ( R \log^2 (N)\right ) $ & $\tilde{\mathcal{O}}\left ( R \log(N) \right )  $ & $ p^{-2}$  & $1D$  & approx. & worst    \\ \hline
\cite{indyk2014sample} & $\mathcal{O}\left ( N \log^3 (N)  \right ) $ & $ \mathcal{O} \left ( R \log(N) \right ) $  & $ \sim N^{-\mathcal{O}\left (R \right )}$   & $d^d$ & approx. & worst  \\ \hline
\cite{ghazi2013sample} & $\mathcal{O}\left ( R \log^2 (N)  \right ) $ & $ \mathcal{O} \left ( R \log(N) \right ) $  & expected & $1D$, $2D$ & approx.  & average \\  \hline

\cite{gilbert2002near} & $\mathcal{O}(R^2 \log^{O(1)}(N) ) $ & $\mathcal{O}(R  \log^{O(1)}(N) ) $ & $ \log(p^{-1}) $  & $ 2^{\mathcal{O}(d)}$ & approx.   & worst    \\ \hline
\cite{gilbert2005improved} & $\mathcal{O}(R \log^{O(1)}(N)) $ & $\mathcal{O}(R \log^{O(1)}(N)) $ & $ \log(p^{-1}) $  & $ 2^{\mathcal{O}(d)}$ & approx.  & worst    \\ \hline
\cite{iwen2010combinatorial} & $ \tilde{\mathcal{O}} \left ( R \log (R) \log^4(N)\right ) $ & $ \tilde{\mathcal{O}} \left ( R \log^4(N)\right )  $ & $\log\left( p^{-1} \right )$  & $1D$ & approx. & worst   \\  \hline
\cite{kapralov2016sparse} & $\mathcal{O}(R \log^{d+3}(N) )$ & $\tilde{\mathcal{O}}( R  \log(N) ) $   &  $ \sim \frac{1}{\log(N)} $  & $2^{\mathcal{O}(d^2)}$ & exact  & worst  \\ \hline
\cite{hassanieh2012nearly} & $\mathcal{O} \left ( R \log (N)\right ) $ & $\mathcal{O}\left ( R \log(N) \right )  $ & constant  & $1D$ & exact  & average   \\  \hline
\cite{iwen2010combinatorial} & $ \tilde{\mathcal{O}} \left ( R^2 \log^4 (N) \right ) $ & $ \tilde{\mathcal{O}} \left ( R^2 \frac{\log^4(N)}{\log(R)} \right )  $ & deterministic  & $1D$ & exact & worst   \\  \hline
\cite{ong2015fast} & $\mathcal{O}( R \log^4 (N) ) $ & $ \mathcal{O}(R \log^3 (N) )$ & $\sim \frac{1}{R \log^3 (N)}  $  & $2D$ & noisy & worst  \\ \hline
\cite{lawlor2013adaptive} & $\mathcal{O} \left ( R \log (R)\right ) $ & $\mathcal{O}\left ( R \right )  $ & deterministic  & $1D$ &  noisy & average \\  \hline
\cite{ghazi2013sample} & $\mathcal{O}\left ( R \log (R)  \right ) $ & $ \mathcal{O} \left ( R \right ) $  & constant & $1D$, $2D$ & noisy  & average \\ \hline \hline
\hline
This paper & $\tilde{\mathcal{O}}\left( R \log^2(R) \log \left( N \right )  \right )$ & $\tilde{\mathcal{O}}\left( R \log(R)  \log \left( N \right )  \right )$  &  $\log(p^{-1} ) $   & $ \mathcal{O}(1)$ & noisy & worst \\ \hline
\end{tabular}
\end{center}
\caption{  \footnotesize Computational characteristics of recent sparse FFT algorithms. $C(d)$: behavior of algorithmic constant with respect to dimension $d$. $C(p)$: behavior of algorithmic constant with respect to probability of failure $p$; $\sim$ indicates the dependence of the probability of failure on other parameters. ``exact" implies the exactly $R$-sparse FFT problem without any noise whereas ``noisy" implies the presence of low-level noise only.}
\label{tab:alg_prop}
\end{table}%
}\end{center}

The paper is structured as follows: in Section \ref{overview}, we
introduce the notation and a description of the
problem. In Section \ref{1DsFFT}, we describe the algorithm in the noiseless
one-dimensional case. The case of noisy data is discussed in Section
\ref{stability}. We describe how to convert between one dimension and multiple dimensions in Section \ref{MD}. Finally, numerical results are provided in Section \ref{Numer}. All proofs can be found in
Appendix~\ref{appendix:proofs}, and a discussion of the generalization to complex vectors can be found in Appendix \ref{complexvec}.   

\section{Statement of the problem and preliminaries}
\label{overview}
In this section, we introduce the notation used throughout the
remainder of the paper. Unless otherwise stated, we consider a 1D function $f(x)$ of the form, 
\begin{equation}
\label{1DsFFT:orig_func}
f(x)  =  \sum_{j=0}^{N-1} e^{-2 \pi i \, x \, j}\, \left ( \hat{f}_j + \eta \hat{\nu}_j \right )  \, ,
\end{equation}
for some finite $0 < N \in \mathbb{N}$ and noise level $0 \leq \sqrt{N}  || \hat{\nu} ||_2 \leq \eta$. It is further assumed  that the vector $\hat{f}$ has real nonnegative elements, i.e., $\hat{f}_j \geq 0, \, \forall j $, and that its support,
\begin{equation*}
\mathcal{S} := \left \{ j \in \{0,1, ..., N-1 \} : |\hat{f}_j| \not = 0   \right \}
\end{equation*}
satisfies $0 \leq  \# \mathcal{S} \leq R < N < \infty$, where $\#$ indicates cardinality. In particular, we are
interested in the case where $R \ll N$. Given some accuracy parameter $\epsilon$ above the noise level $\eta$, the problem involves computing an $R$-sparse vector $f^*$ such that,
\begin{equation*}
|| f^* - \hat{f} ||_2 \leq \epsilon \, || \hat{f}||_2
\end{equation*}

We shall denote by
$\mathcal{F}$ the Fourier transform (and $\mathcal{F}^*$ its
inverse/adjoint), i.e., 
$$\mathcal{F}\left [ \hat{f} (\xi) \right ] (x)
= \int_{\mathbb{R}^d} e^{-2\pi i \, x\cdot\xi } \, \hat{f} (\xi) \,
\mathrm{d} \xi$$
where $d$ represents the ambient dimension. The
size-$N$ Discrete Fourier Transform (DFT) is defined as
\begin{equation}
\label{notation:dftdef}
f_{n;N} =\left [ F_N \, \hat{f} \right ]_n  =  \sum_{j=0}^{N-1} e^{-2 \pi i  \frac{n \, j}{N}  } \, \hat{f}_j  , \; n = 0,1, ..., N-1.
\end{equation}

\section{A sparse FFT in 1D}
\label{1DsFFT}
In this section, we describe a fast way to compute the one-dimensional
DFT of a bandlimited and periodic function $f(x)$ of the form of
Eq.\eqref{1DsFFT:orig_func}. Our approach to
this problem can be broken into two separate steps: in the first step,
the support $\mathcal{S}$ of the vector $\hat{f}$ is recovered, and in
the second step, the nonzero values of $\hat{f}$ are computed using
the knowledge of the recovered support. We describe the algorithm in
the noiseless case 
in this section, followed by a discussion of its stability to noise
in Section \ref{stability}. Pseudo-code is provided in Algorithms
\ref{1DsFFT:algorithm}-\ref{1DsFFT:compute_values}. 

\begin{algorithm}[H]
\caption{1DSFFT($R,N,p$)}
\label{1DsFFT:algorithm}
\begin{algorithmic}[1]
\State Let $\mu$, $\Delta$ and $\eta$ be estimates for $\min_{j \in \mathcal{S} } |\hat{f}_j| $,$\frac{|| \hat{f}||_\infty}{\mu}$ and the noise $\sqrt{N} \, || \hat{\nu} ||_2$ respectively.
\State (In the noiseless case, let $\eta$ be the desired level of accuracy)
\\ $\mathcal{S} \leftarrow \mathrm{FIND\_\,SUPPORT}(R,N,p,\mu,\Delta, \eta)$
\\ $\hat{f} \leftarrow \mathrm{COMPUTE\_\,VALUES}(\mathcal{S},R,N,p,\mu,\Delta,\eta)$
\\ Output: $\hat{f}, \mathcal{S}$.
\end{algorithmic}
\end{algorithm}

\subsection{Finding the support} 
For the remainder of this section, refer
to the example in Figure \ref{1DsFFT:findsupport}. From a 
high-level perspective, our support-finding scheme uses three major
ingredients: 1)sub-sampling, 2)shuffling and 3)low-pass
filtering. Sub-sampling reduces the size of the problem to
a manageable level, but leads to aliasing. Nonetheless, when the nonnegativity assumption is satisfied, an aliased Fourier
coefficient is nonzero \emph{if and only if} its corresponding aliased
lattice contains an element of the true support  (note that positivity
is crucial here to avoid cancellation). This provides a useful
criterion to discriminate between elements that belong to the support
and elements that do not.  

To help the reader better understand the scheme, we proceed through an example and refer to  Figure~\ref{1DsFFT:findsupport}. To begin with, consider $k,N,M_k \in 
\mathbb{N}$, $0< \alpha <1$ and $ \mathcal{S}_k,\mathcal{W}_k,
\mathcal{M}_k \subset \{ 0,1,..., N-1\}$. We define the following,
\begin{itemize}
\setlength\itemsep{0em}
\item the \emph{aliased support} $\mathcal{S}_k$ at step $k$
  corresponds to the indices of the elements of the true support $\mathcal{S}$ modulo
  $M_k$;
\item the \emph{working support} at step $k$ corresponds to the set
  $\mathcal{W}_k := \{ 0, 1, ..., M_k - 1\}$;
\item a \emph{candidate support} $\mathcal{M}_k$ at step $k$ is any set satisfying $ \mathcal{S}_k \subset \mathcal{M}_k \subset \mathcal{W}_k$ of size
  $\mathcal{O}(\rho R \log(R))$. 
\end{itemize}

Line {\bf 0)} (Figure \ref{1DsFFT:findsupport}) represents a lattice
(thin tickmarks) of size,
$$N=40 = 5\,  \prod_{i=1}^3 2 = K \,\prod_{i=1}^P \rho_i$$
which contains only $3$ positive frequencies (black dots; $\mathcal{S} = \{1, 23,35 \}$).  In the beginning, (step
$k=0$) only the fact that $\mathcal{S} \subset \{0, 1, ..., N-1\}$ is known. The first step ($k=1$) is performed as follows: letting,
$$M_1 = \frac{N}{\prod_{i=2}^P \rho_i }= \rho_1 K =\mathcal{O}(R \log(R) )$$
sample the function $f(x)$ at,
$$x_{n_1
  \prod_{i=2}^P \rho_i ; \, N} = \frac{n_1 \,\prod_{i=2}^P \rho_i}{N}
= \frac{n_1}{M_1}  =  x_{n_1; \, M_1}$$
to obtain,  
\begin{align}
\label{1DsFFT:aliasing}
f_{n_1 \prod_{i=2}^P \rho_i  ;\, N} &=    \sum_{j=0}^{N-1} e^{-2 \pi i
  \, \frac{n_1 \prod_{i=2}^P \rho_i  \, j}{N} }\, \hat{f}_j =
\sum_{l=0}^{M_1-1} e^{-2 \pi i \, \frac{n_1 \,  l}{M_1} }\, \left (
\sum_{j :  j \mathrm{mod} M_1 = l } \hat{f} _j\right ) =
\sum_{l=0}^{M_1-1} e^{-2 \pi i \, \frac{n_1 \,  l}{M_1} } \, \hat{f}
_j^{(1)} =  f_{n_1 ; \, M_1}  
\end{align}
for $n_1 \in \mathcal{M}_1 := \{ 0,1,..., M_1-1 \} $ defined as the
candidate support in the first step. The samples
correspond to a DFT of size $M_1$ of the vector 
$\hat{f}^{(1)} $ with entries that are an \emph{aliased version} of
those of the original vector $ \hat{f}$. 
These can be computed through the FFT in order $\mathcal{O}(M_1
\log(M_1)) =  \mathcal{O}(R \log^2 (R))$.  In this first
step, it is further possible to rapidly identify the aliased support
$\mathcal{S}_1$ from the knowledge of $\hat{f}^{(1)}$ since the former
correspond to the set,
 $$ \{ l \in \{ 0, 1, ..., M_1 - 1\} : \hat{f}^{(1)}_l \not = 0  \}$$
due to the fact that $$\hat{f}^{(1)}_l :=  \sum_{j :  j \mathrm{mod} M_1 = l } \hat{f} _j >0 \Leftrightarrow l \in \mathcal{S}_1$$ following the nonnegativity assumption. In our example, $M_1 = \rho_1 K = 2 \cdot 5= 10$ which leads to
\begin{align*}
\mathcal{S}_1 &= \{1\, \mathrm{mod} \, 10 , \,  23\, \mathrm{mod} \,
10, \, 35\, \mathrm{mod} \, 10 \} = \{ 1,3,5 \} =  \{ l \in \{ 0, 1, ..., 9\} : \hat{f}^{(1)}_l \not = 0  \} \\
 \mathcal{W}_1 &= \mathcal{M}_1 = \{ 0,1,..., 9\}.
\end{align*}
This is shown on line {\bf 1)} of Figure~\ref{1DsFFT:findsupport}. For
this first step, the working support $\mathcal{W}_1$ is equal to the
candidate support $\mathcal{M}_1$.

\begin{figure}[H]
\begin{tikzpicture}

\draw[line width = 1.pt] (0,0) -- (\L,0);
\foreach \x in {0,...,\N}{
	\draw[line width = 0.5pt] (\DELTA*\x,-0.1) -- (\DELTA*\x, 0.1);
	}
	
\renewcommand*{\k}{1};
\foreach \x in \LocList{
	\draw[fill=black] (\DELTA*\x, 0) circle  (2pt);
	\k = \k+1;
}
\draw (0,-0.1) node[below] {$0$};
\draw (\L,-0.1) node[below] {$N =40$};
\draw (-1,0.) node {\bf 0)};

\end{tikzpicture}

\vspace*{5pt}

\begin{tikzpicture}

\draw[line width = 1.pt] (0,0) -- (\La,0);
\foreach \x in {0,...,\Ma}{
	\draw[line width = 0.5pt] (\DELTA*\x,-0.1) -- (\DELTA*\x, 0.1);
	}
	
\renewcommand*{\k}{1};
\foreach \x in \LocListAliasa{
	\draw[line width = 2.pt] (\DELTA*\x,-0.1) -- (\DELTA*\x, 0.1);
	\k = \k+1;
}
\draw (0,-0.1) node[below] {$0$};
\draw (\La,-0.1) node[below] {$ M_1 =10$};
\draw (-1,0.) node {\bf 1)};

\end{tikzpicture}

\begin{tikzpicture}

\draw[line width = 1.pt] (0,0) -- (\Lb,0);
\foreach \x in \LocListSearchb{
	\draw[line width = 0.5pt] (\DELTA*\x,-0.1) -- (\DELTA*\x, 0.1);
	}
	
\renewcommand*{\k}{1};
\foreach \x in \LocListAliasb{
	\draw[line width = 2.pt] (\DELTA*\x,-0.1) -- (\DELTA*\x, 0.1);
	\k = \k+1;
}
\draw (0,-0.1) node[below] {$0$};
\draw (\Lb,-0.1) node[below] {$M_2 = 20$};
\draw (-1,0.) node {\bf 2)};

\end{tikzpicture}


\begin{tikzpicture}

\draw[line width = 1.pt] (0,0) -- (\L,0);
\foreach \x in \LocListSearchc{
	\draw[line width = 0.5pt] (\DELTA*\x,-0.1) -- (\DELTA*\x, 0.1);
	}
	
\renewcommand*{\k}{1};
\foreach \x in \LocList{
	\draw[line width = 2.pt] (\DELTA*\x,-0.1) -- (\DELTA*\x, 0.1);
	\k = \k+1;
}
\draw (0,-0.1) node[below] {$0$};
\draw (\L,-0.1) node[below] {$N =40 $};
\draw (-1,0.) node {\bf 3)};

\end{tikzpicture}

\caption{\footnotesize Computing the support $\mathcal{S}$. Line {\bf 0)}:
  Initialization; (unknown) elements of $\mathcal{S}$ correspond to black dots
  and lie in the grid $\{ 0, 1, ..., N-1\}$. Line {\bf 1)}: First
  step; elements of the candidate support $\mathcal{M}_1$ are
  represented by thin tickmarks and those of the aliased support
  $\mathcal{S}_1$ by thick tickmarks. $\mathcal{S}_1$ is a subset of
  $\mathcal{M}_1$ and both lie in the working support $\{ 0, 1, ...,
  M_1-1\}$. Line {\bf 2)}: Second step; elements of the candidate
  support $\mathcal{M}_2$ correspond to thin tickmarks and are
  obtained through de-aliasing of $\mathcal{S}_1$. Elements of the
  aliased support $\mathcal{S}_2$ correspond to thick tickmarks. Both
  lie in the working support $\{ 0, 1, ..., M_2-1\}$. $M_2$ is a
  constant factor of $M_1$.  Line {\bf 3)}: The final step correspond to the step when the working is equal to $\{ 0,1 , ..., N-1\}$.} 
\label{1DsFFT:findsupport}
\end{figure}
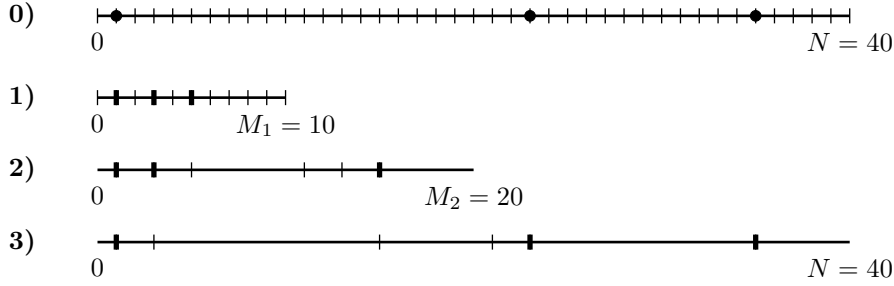

Then, proceed to the next step ($k=2$) as follows: let,
$$M_2 = \rho_2 M_1 = K \, \prod_{i=1}^2 \rho_i = 5 \cdot 2^2 = 20$$
and consider the samples,
\begin{align*}
f_{n_2 \prod_{i=3}^P \rho_i ; \, N}  = \sum_{l=0}^{M_2-1} e^{-2 \pi i \, \frac{n_2 l }{M_2} }\, \left ( \sum_{j :  j \mathrm{mod} M_2 = l } \hat{f}_j\right ) = \sum_{l=0}^{M_2-1} e^{-2 \pi i \, \frac{n_2 \,  l}{M_2} }\,  \hat{f}^{(2)}_l =f_{n_2; \, M_2}  
\end{align*}
for $n_2 = 0,1,..., M_2-1$ as before. Here however, knowledge of
$\mathcal{S}_1$ is 
incorporated. Indeed, since $ M_2$ is a multiple of $M_1$, it follows
upon close examination that,
$$\mathcal{S}_2 \subset \cup_{k=0}^{\rho_1
  - 1} \left (  \mathcal{S}_1 +   k M_1 \right ) :=
\mathcal{M}_2 .$$
That is, the set $\mathcal{M}_2$, defined as the union
of $\rho_1 = \mathcal{O}(1)$ translated copies of $\mathcal{S}_1$,
must itself contain $\mathcal{S}_2$. Furthermore, it is of size
$\mathcal{O}(\rho_1 \, \# \mathcal{S}_1  ) = \mathcal{O}(\rho R \log(R))$ by
construction. It is thus a proper candidate support (by
definition). In our example, one obtains 
\begin{equation*}
\cup_{k=0}^1 \left (  \mathcal{S}_1 +   k M_1 \right ) = \left \{
  1,3,5 \right \} \cup \left \{ 1+10, 3+10, 5+10  \right \} = \left \{
  1,3,5,11,13,15  \right \} = \mathcal{M}_2 ,
\end{equation*}
which contains the aliased support,
$$\mathcal{S}_2 = \{1\,
\mathrm{mod} \, 20 , \,  23\, \mathrm{mod} \, 20, \, 35\, \mathrm{mod}
\, 20 \} = \{ 1,3,15 \}$$
as shown on line \textbf{2)} of Figure
\ref{1DsFFT:findsupport}. The working support becomes
$\mathcal{W}_2 := \{0, 1, ..., 19 \} $. Once again, it is possible to
recover $\mathcal{S}_2$ by leveraging the fact that $\{ l \in \{ 0, 1,
..., M_2 - 1\} : \hat{f}^{(2)}_l  \not = 0  \} = \mathcal{S}_2$. Here
however, the cost is higher since computing $\hat{f}^{(2)}$ involves
performing an FFT of size $M_2 = 20$.  Continuing in the fashion of the first step, the
cost would increase  
exponentially with $k$, so additional steps are required to contain the
cost. Such steps involve a special kind of shuffling and filtering of the samples followed by an FFT, and we describe this in
detail in Section \ref{subsec:rapid_recovery} below. Altogether, it is shown that $\mathcal{S}_k$ can now be recovered from the knowledge
of $\mathcal{M}_k$ at any step $k$ using merely $\tilde{\mathcal{O}}(R \log(R) )$ samples and
$\tilde{\mathcal{O}}(R \log^2 (R))$ computational steps. 

Following the rapid recovery of $\mathcal{S}_2$, we proceed in a
similar fashion until $\mathcal{W}_k := \{ 0,1, ...,N-1\} $
at which point $\mathcal{S}_k = \mathcal{S}$. Throughout this process, the size of the aliased support $\mathcal{S}_k$ and candidate
support $\mathcal{M}_k$ remain of order $\mathcal{O}(R \log(R))$ while the size of
the working support increases exponentially fast; i.e.,
$$\#\mathcal{W}_k = \mathcal{O}(K \, \prod_{i=1}^k \rho_i  ) \geq 2^k
\cdot R.$$
This therefore implies $P = \mathcal{O} \left ( \log\left ( \frac{N}{R} \right )  \right )$ ``dealiasing'' steps, and thus a total cost of $\tilde{\mathcal{O}} \left ( R \log(R) \log \left ( \frac{N}{R}  \right ) \right )$ samples and $\tilde{\mathcal{O}} \left ( R \log^2(R) \log \left ( \frac{N}{R}  \right ) \right )$ computational steps to identify $\mathcal{S}$. The steps of this support-recovery algorithm are described in Algorithm~\ref{1DsFFT:findsupport_alg}, the correctness of which is
guaranteed by the following proposition,  
\begin{prop}
In the noiseless case, Algorithm \ref{1DsFFT:findsupport_alg} outputs
$\mathcal{S}$, the support of the nonnegative $R$-sparse vector $\hat{f}$ with probability at least $(1-p)$ using $\tilde{\mathcal{O}} \left ( R \log(R) \log \left ( \frac{N}{R}  \right ) \right )$ samples and $\tilde{\mathcal{O}} \left ( R \log^2(R) \log \left ( \frac{N}{R}  \right ) \right )$ computational steps. 
\end{prop}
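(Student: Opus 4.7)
\medskip

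\noindent\textbf{Proof plan.} The plan is to induct on the dealiasing step $k = 0, 1, \ldots, P$, maintaining the invariant that at the start of step $k$ we have a valid candidate support, i.e.\ $\mathcal{S}_k \subset \mathcal{M}_k \subset \mathcal{W}_k$ with $\#\mathcal{M}_k = \mathcal{O}(\rho R \log(R))$ and $\#\mathcal{W}_k = \Theta\bigl(K \prod_{i=1}^k \rho_i\bigr)$. The base case $k=1$ is direct: since $M_1 = \rho_1 K = \mathcal{O}(R\log(R))$, one computes $\hat f^{(1)}$ exactly via a single FFT in $\mathcal{O}(M_1 \log M_1) = \mathcal{O}(R\log^2 R)$ time from $M_1$ samples, and then uses the nonnegativity identity
\[
\hat f^{(1)}_l = \!\!\sum_{j : j \bmod M_1 = l}\!\! \hat f_j \; > \; 0 \;\;\Longleftrightarrow\;\; l \in \mathcal{S}_1
\]
to read off $\mathcal{S}_1$ exactly. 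This step is deterministic.

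\medskip

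\noindent For the inductive step ($k \to k+1$), assume $\mathcal{S}_k$ is known. Set $M_{k+1} = \rho_{k+1} M_k$ and $\mathcal{M}_{k+1} = \bigcup_{j=0}^{\rho_{k+1}-1}(\mathcal{S}_k + j M_k)$. Since any $s \in \mathcal{S}_{k+1}$ satisfies $s \bmod M_k \in \mathcal{S}_k$ and $s < M_{k+1}$, the containment $\mathcal{S}_{k+1} \subset \mathcal{M}_{k+1} \subset \mathcal{W}_{k+1}$ holds, and $\#\mathcal{M}_{k+1} \leq \rho_{k+1}\,\#\mathcal{S}_k = \mathcal{O}(\rho R\log R)$ by induction. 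Crucially, instead of computing the full size-$M_{k+1}$ aliased DFT (which grows exponentially in $k$), one invokes the \emph{rapid recovery} subroutine of Section~\ref{subsec:rapid_recovery}, which by hypothesis recovers $\mathcal{S}_{k+1}$ from $\mathcal{M}_{k+1}$ using $\tilde{\mathcal{O}}(R\log R)$ samples and $\tilde{\mathcal{O}}(R\log^2 R)$ operations, with failure probability at most some $q_k$.

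\medskip

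\noindent For the overall correctness and probability bound, note that the working support grows by a factor $\rho_{k+1} \geq 2$ each step, so $P = \mathcal{O}(\log(N/R))$ steps suffice to reach $\mathcal{W}_P = \{0,\ldots,N-1\}$, at which point $\mathcal{S}_P = \mathcal{S}$. Setting each per-step failure probability to $q_k = p / P$ and applying a union bound yields overall failure probability $\leq p$; since the rapid recovery routine scales as $\log(q_k^{-1})$ in its sample/time complexity, this introduces only a $\log(P/p) = \log\log(N/R) + \log(p^{-1})$ multiplicative factor, which is absorbed into the $\tilde{\mathcal{O}}(\cdot)$ and the stated $\log(p^{-1})$ dependence. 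Summing the per-step costs over the $P$ steps gives the advertised totals $\tilde{\mathcal{O}}(R\log(R)\log(N/R))$ samples and $\tilde{\mathcal{O}}(R\log^2(R)\log(N/R))$ operations.

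\medskip

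\noindent\textbf{Main obstacle.} The only nontrivial content above is the rapid recovery lemma used at each inductive step: given $\mathcal{M}_k$ containing $\mathcal{S}_k$ inside a working support of size potentially as large as $N$, sift out $\mathcal{S}_k$ without paying an FFT of size $M_k$. This is where the shuffling/filtering machinery enters, and where the randomness and the $\log(p^{-1})$ scaling come from. The present proposition simply composes that lemma with the deterministic deinterleaving $\mathcal{M}_{k+1} = \bigcup_j (\mathcal{S}_k + jM_k)$ and a union bound; all of the genuine difficulty is deferred to the analysis of the subroutine in Section~\ref{subsec:rapid_recovery}.
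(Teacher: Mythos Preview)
Your proposal is correct and follows essentially the same approach as the paper: the paper's own proof is a one-line reference to the preceding narrative discussion and to the subroutine analysis (Proposition~\ref{1DsFFT:findsupport:findsupport_proof}), and your induction on $k$, construction of $\mathcal{M}_{k+1}=\bigcup_j(\mathcal{S}_k+jM_k)$, invocation of the rapid-recovery subroutine, and summation over $P=\mathcal{O}(\log(N/R))$ steps are precisely a formalization of that narrative. If anything you are slightly more careful than the paper in making the union bound (per-step failure $p/P$) explicit, which the paper leaves implicit.
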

\begin{proof}
Refer to Algorithm \ref{1DsFFT:findsupport_alg} and Proposition
\ref{1DsFFT:findsupport:findsupport_proof}
(Appendix~\ref{appendix:support}) as well as the above discussion. 
\end{proof}
From the knowledge of $\mathcal{S}$, it is possible to recover the
actual values of $\hat{f} $ rapidly and with few samples. This is the
second major step of the sMFFT which we describe below in Section \ref{1DsFFT:CV}.

\begin{algorithm}[H]
\caption{ $\mathrm{FIND\_\,SUPPORT}(R, \tilde{N},p,\mu, \Delta, \eta)$}
\label{1DsFFT:findsupport_alg}
\begin{algorithmic}[1]
\State Pick $2 \leq \rho $ and $0 < \delta \ll 1$ such that $\eta \leq \frac{\delta \, \mu}{2}$ .
\State Let $\alpha = \frac{1}{\rho}$, $K = \frac{\max \{ 8 , \frac{2}{\alpha}  \} }{\pi} \, R \, \sqrt{\log\left ( \frac{2R \Delta}{\delta} \right ) \log \left(  \frac{2 \Delta}{\delta} \right )} $ and choose $N = K \, \prod_{i=1}^P \rho_i \geq \tilde{N}$ where $2 \leq \rho_i  \leq \rho \; \forall \; i$
\State Let $M_1 = K$ and $ \mathcal{M}_1 := \left \{ 0,1,..., M_1-1  \right \} $.
\For{ $k$ from $1$ to $P$ }
	\State $\mathcal{S}_k \leftarrow \mathrm{FIND\_ALIASED\_SUPPORT}(\mathcal{M}_{k}, M_k, K,\alpha,p, \delta, \mu,  \Delta)$
	\State $ \mathcal{M}_{k+1} := \cup_{m=0}^{\rho_k-1} \left (  \mathcal{S}_{k} +   m M_{k} \right )$.
	\State $M_{k+1} := \rho_k \,M_k$
	\State $\alpha \leftarrow \frac{1}{\rho_i}$
\EndFor
\State Output: $\mathcal{S}_P$.
\end{algorithmic}
\end{algorithm}

\subsubsection{Rapid recovery of $\mathcal{S}_k$ from knowledge of
  $\mathcal{M}_k$.}
\label{subsec:rapid_recovery}
Details are given here as to how to solve the
problem of rapidly recovering the aliased support $\mathcal{S}_k$ from
the knowledge of a candidate support $\mathcal{M}_k$. Before
proceeding, a few definitions are introduced.  
\begin{defn}
\label{1DsFFT:Fourierset}
Let $ 1 \leq K \leq M \in \mathbb{N}$. Then, define the set $\mathcal{A}(K;M)$ as,
\begin{equation*}
\mathcal{A}(K;M) := \left \{ m  \in \{ 0,1,...,M-1 \} \, : \,m \leq \frac{K}{2} \; \mathrm{or} \;  \left |  m - M \right | <  \frac{K}{2} \right \} 
\end{equation*}
\end{defn}
\begin{defn}
\label{1DsFFT:Qset}
Let $0< M \in \mathbb{N}$. Then, we define the set $\mathcal{Q}(M)$ as,
$$\mathcal{Q}(M):= \left  \{ q
\in [0,M)  \cap \mathbb{Z} : q 
  \perp M \right \}, $$
  where the symbol $\perp$ between two integers
  indicates they are coprime. 
\end{defn}
Algorithm \ref{RASR:recoveraliasedsupp} shows how to solve the aliased
support recovery problem rapidly; correctness is
guaranteed by Proposition \ref{RASR:correctness}, which relies on
Proposition \ref{1DsFFT:findsupport:findsupport_proof}
(Appendix~\ref{appendix:support}). Proposition
\ref{1DsFFT:findsupport:findsupport_proof} states that if the elements of
an aliased vector of size $M_k$ with aliased support $\mathcal{S}_k$
containing at most $R$ nonzeros are shuffled (according to appropriate
random permutation) and subsequently convoluted with a 
(low-frequency) Gaussian, then the probability that the resulting
value at a location $m \in \mathcal{S}_k^c$ is of order
$\mathcal{O}(1)$ is small.  If $m \in \mathcal{S}_k$, the value
at $m$ is of order $\mathcal{O}(1)$ with probability $1$. This
realization allows us to develop an \emph{efficient statistical test}
to identify $\mathcal{S}_k$ from the knowledge of $\mathcal{M}_k$. The
process is shown schematically in Figure
\ref{1DsFFT:findsupport:schematic}. Specifically, the four following
steps are performed: 1) permute samples randomly, 2) apply a diagonal
Gaussian filter, 3) compute a small FFT, 4) eliminate elements that do
not belong to the aliased support. To help the reader better understand,
we once again proceed through an example. To begin with, assume
$M_k=40$, and  
 \begin{align*}
 \mathcal{S}_k = \{	1,  23, 35\},\, \mathcal{M}_k = \{ 1, 3, 15, 21, 23, 35 \},\, \mathcal{W}_k = \{ 1,2, ... , 39\}
\end{align*} 
as in step $k=3$ of the previous section (line A), Figure \ref{1DsFFT:findsupport:schematic}). The first step is to randomly
shuffle the elements of $ \mathcal{M}_k $ within $\mathcal{W}_k$ by
applying a permutation operator $\Pi_Q (\cdot)$ in sample 
space
\begin{equation}
\label{support:permutation}
\Pi_Q \left ( f_{n;\,M_k} \right ) =  f_{ (n [Q]^{-1}_{M_k} ) \, \mathrm{mod} \, M_k; \, M_k}
\end{equation}
for some integer $Q \in \mathcal{Q}(M_k)$ ($[Q]_{M_k}^{-1}$ being the unique inverse of $Q$ modulo
$M_k$) by Lemma \ref{1DsFFT:findsupport:hash} (Appendix \ref{appendix:support}). This
is equivalent to shuffling in frequency space as:  $j \rightarrow  (j Q) \, \mathrm{mod} M_k$. Indeed, after shuffling, evaluating at $(j Q) \, \mathrm{mod} M_k$ gives,
$$ \hat{f}^{(k)}_{(  ( (jQ) \, \mathrm{mod} M_k ) [Q]_{M_k}^{-1}) \, \mathrm{mod} M_k} = \hat{f}^{(k)}_j .$$ 
Furthermore, Lemma \ref{1DsFFT:findsupport:shuffle}
(Appendix~\ref{appendix:support}) shows that if $Q$ is chosen
uniformly at random within $\mathcal{Q}(M_k)$, the mapped elements of
the candidate support $\mathcal{M}_k$ will be more or less uniformly distributed
within the working support $\mathcal{W}_k$,  
\begin{equation*}
\mathbb{P} \left ( \left . \left | (i Q) \, \mathrm{mod} M_k - (j Q) \, \mathrm{mod} M_k  \, \right | \leq C \, \right | \,  i \not = j  \right ) \leq \mathcal{O}\left( \frac{C}{M_k} \right ).
\end{equation*}
For our example, assume $Q=13$. The
sets $\mathcal{S}_k$ and $\mathcal{M}_k$ are then mapped by $\Pi_Q(\cdot)$
to (line {\bf B)}, Figure \ref{1DsFFT:findsupport:schematic}). 
\begin{align*}
\mathcal{S}_k^{\mathrm{shuffled}} &=\{ (1\cdot13) \, \mathrm{mod} \, 40, (23 \cdot 13)\, \mathrm{mod} \, 40,\, (35 \cdot 13)\, \mathrm{mod} \, 40 \} = \{ 13,15,19 \} \\
\mathcal{M}_k^{\mathrm{shuffled}} &=\{ (1\cdot13) \, \mathrm{mod} \, 40,\,  (3\cdot 13)\, \mathrm{mod} \, 40,\, (15\cdot 13)\, \mathrm{mod} \, 40,\, (21 \cdot 13)\, \mathrm{mod} \, 40,\, (23 \cdot 13)\, \mathrm{mod} \, 40,\, (35 \cdot 13)\, \mathrm{mod} \, 40 \} \\
&= \{ 13,15,19,33,35,39 \} 
\end{align*}
\begin{figure}[H]
\begin{tikzpicture}

\draw[line width = 1.pt] (0,0) -- (\L,0);
\renewcommand*{\k}{1};
\foreach \x in \LocListSearchc{
	\draw[line width = 0.5pt] (\DELTA*\x,-0.1) -- (\DELTA*\x, 0.1);
	\draw (\DELTA*\x,0.3) node[scale=0.8] {\x};
	\k = \k+1;
	}
	
\foreach \x in \LocList{
	\draw[line width = 2.pt] (\DELTA*\x,-0.1) -- (\DELTA*\x, 0.1);
}

\draw (-1,0.) node {\bf A)};

\end{tikzpicture}

\begin{tikzpicture}
\draw[line width = 1.pt] (0,0) -- (\L,0);
\foreach \x in \LocListSearchShuffled{
	\draw[line width = 0.5pt] (\DELTA*\x,-0.1) -- (\DELTA*\x, 0.1);
	\draw (\DELTA*\x,0.3) node[scale=0.8] {\x};
	}
	
\renewcommand*{\k}{1};
\foreach \x in \LocListAliasShuffled{
	\draw[line width = 2.pt] (\DELTA*\x,-0.1) -- (\DELTA*\x, 0.1);
	\draw (\DELTA*\x,0.3) node[scale=0.8] {\x};
	\k = \k+1;
}
\draw (-1,0.) node {\bf B)};

\end{tikzpicture}

\begin{tikzpicture}
\draw[line width = 1.pt] (0,0) -- (\L,0);
\foreach \x in \LocListReduced{
	\draw (4*\DELTA*\x,0) node[cross=3pt,black]{};
	}
	
\draw[line width = 1.pt] (0,0) -- (\L,0);
\foreach \x in \LocListSearchShuffled {
	\draw[line width = 0.5pt] (\DELTA*\x,-0.1) -- (\DELTA*\x, 0.1);
	\draw (\DELTA*\x,0.3) node[scale=0.8] {\x};
	}

\renewcommand*{\k}{1};
\foreach \x in \LocListAliasShuffled{
	\draw[line width = 2.pt] (\DELTA*\x,-0.1) -- (\DELTA*\x, 0.1);
	\draw (\DELTA*\x,0.3) node[scale=0.8] {\x};
	\k = \k+1;
}

\draw[scale=0.25,thick,domain=0:40,samples=100, smooth,variable=\z,black] plot ({\z},{3*exp(-(\z - 13)*(\z-13)/9) + 1.5*exp(-(\z - 15)*(\z-15)/9) + 2.5*exp(-(\z - 19)*(\z-19)/9)});
\draw (-1,0.) node {\bf C)};

\end{tikzpicture}

\caption{\footnotesize Finding the aliased support $\mathcal{S}_k$ from
  knowledge of $\mathcal{M}_k$ (line {\bf A)}). First, indices are
  shuffled in sample space leading to a shuffling in frequency space
  (line {\bf B)}). A Gaussian filter is applied followed by a small
  FFT (line {\bf C)}) on a grid $G$ ($\times$). The points of $\mathcal{M}_k$ for
  which the value of the result of the last step at their closest
  neighbor in $G$ is small are discarded leaving only the aliased
  support $\mathcal{S}_k$. 
} 
\label{1DsFFT:findsupport:schematic}
\end{figure}
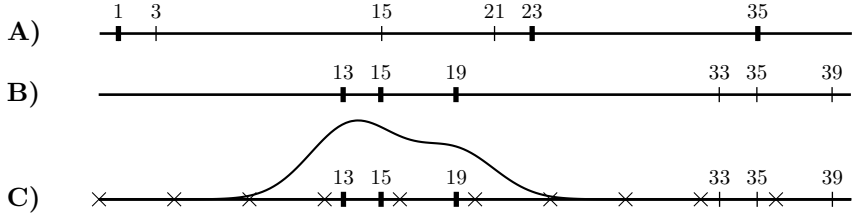
This step is followed by the application of a diagonal Gaussian
filtering operator $\Psi_\sigma(\cdot)$ having elements 
\begin{equation}
\label{support:filterdef}
 g_\sigma \left ( \frac{m}{M_k} \right ) = \sqrt{\pi} \sigma \sum_{h \in  \mathbb{Z}} e^{- \pi^2 \sigma^2 \left ( \frac{m+hM}{M}  \right )^2} 
\end{equation}
in sample space (step 2). By the properties of the Fourier transform,
this is equivalent to a convolution in frequency space (line  {\bf
  C)}, Figure \ref{1DsFFT:findsupport:schematic}), implying the
equality:  
\begin{equation}
\label{support:smoothFT}
\left [ \Psi_\sigma \left ( \Pi_Q \left ( f_{n;\,M_k} \right ) \right ) \right ] (\xi)  = \mathcal{F}^* \left [ \sum_{j \in\mathcal{S}_k} \hat{f}^{(k)}_{ j} \,  e^{- \frac{ \left | x  -   ( j Q ) \, \mathrm{mod} \, M_k \right  |^2}{\sigma^2} } \right ] (\xi). 
\end{equation}
The function is now \emph{bandlimited} (with bandwidth of order
$\mathcal{O}(K)$ thanks to our choice for $\sigma$; Algorithm
\ref{RASR:recoveraliasedsupp}), so this expression can be discretized (samples $\times$, Figure
\ref{1DsFFT:findsupport:schematic}) to produce our main expression,
\begin{equation}
\label{RASR:maineq}
 \phi^{(k)}_n (Q) = F_{\mathcal{A}(K,M_k)} \left [ \Psi_\sigma \left ( \Pi_Q \left ( f_{n;\,M_k} \right ) \right ) \right ]_n = \frac{1}{M_k} \sum_{m \in \mathcal{A}(K,M_k)}  e^{2 \pi i \frac{ n  m }{K} }  \, g_\sigma \left ( \frac{m}{M_k} \right )\,  f_{ (m [Q]^{-1}_{M_k} ) \, \mathrm{mod} \, M_k; \, M_k}
\end{equation}

In particular, we note that if $n$ if of the form $ j \frac{M_k}{K}$ for
$j=0,...,K-1$, the last step can be performed through a small size-$K$
FFT. This corresponds to step 3 of the aliased support
recovery algorithm.  The knowledge of,
$$\left \{ \phi^{(k)}_{j
  \frac{M_k}{K}} \right \}_{j=0}^{K-1}$$
can be used to recover $\mathcal{S}_k$ from $\mathcal{M}_k$ rapidly, seen
intuitively as follows: by construction, $\phi^{(k)}_n$ can be ``large'' only if the distance between $n$ and some element belonging to
the shuffled and aliased support, i.e., some element of $\left \{ (lQ) \, \mathrm{mod} M_k\right \}_{l \in \mathcal{S}_k}$, is
smaller than $\mathcal{O}(\sigma)$, which in turn occurs only if
the distance between $ \left [ n \frac{K}{M_k}  \right ] \,
\frac{M_k}{K}$ and the shuffled elements of the aliased support is smaller than
$\mathcal{O}(\sigma)$ as well (by the triangle inequality). However,
because of the randomness introduced by the shuffling, and because of
the particular choice of $\sigma$, it can be shown (Proposition
\ref{1DsFFT:findsupport:findsupport_proof}) that for any fixed $n \in
\mathcal{M}_k$, the probability that a computed element $ \phi^{(k)}_{
  \left [  (n Q \, \mathrm{mod} \, M_k) \frac{K}{M_k} \right ] \frac{M_k}{K} }$ is ``large'' for
multiple independent trials is small if $n \in \mathcal{M}_k \cap
\mathcal{S}_k^c$ and equal to $1$ if $n \in \mathcal{M}_k \cap
\mathcal{S}_k$. This fact allows for the construction of an efficient
statistical test based on the knowledge of the quantities found in
Eq.\eqref{RASR:maineq} to discriminate between the points of
$\mathcal{M}_k \cap \mathcal{S}_k^c$ and those of $\mathcal{M}_k \cap
\mathcal{S}_k$ (step 4). Such a test constitutes the core of Algorithm
\ref{RASR:recoveraliasedsupp}, and its correctness follows from the
following proposition (Appendix~\ref{appendix:support}).

As for the computational cost, the permutation and filtering
(multiplication) steps 
(1 and 2) both incur a cost of $\tilde{\mathcal{O}}(R \log(R))$ since
only the samples for which the filter is of order $\mathcal{O}(1)$ are
considered (and there are $\mathcal{O}( K) = \tilde{\mathcal{O}}(R \log(R))$ of them following our choice of $\sigma$ and
$K$). These are followed by an FFT (step 3) of size $ \mathcal{O}(K
)$ which carries a cost of order  $\tilde{\mathcal{O}}(R \log^2(R))$. Finally, step 4 involves checking a simple property on each of the
$M_k$ elements of $\mathcal{M}_k$ incurring a cost of
$\tilde{\mathcal{O}}(R \log(R))$.  This is repeated $\mathcal{O}(\log(p))$ times
for a probability $(1-p)$ of success. Thus, extracting $\mathcal{S}_k$ from $\mathcal{M}_k$ requires merely
$\tilde{\mathcal{O}}(\log(p)  R \log(R) ) $ samples and
$\tilde{\mathcal{O}}(\log(p) R \log^2(R))$ 
computational time for fixed $p$, as claimed.  
\begin{prop}
\label{RASR:correctness}
In the noiseless case, Algorithm \ref{RASR:recoveraliasedsupp} outputs
$\mathcal{S}_k$, the aliased support of the vector $\hat{f}$ at step
$k$, with probability at least $(1-p)$ using $\tilde{\mathcal{O}}(\log(p)  R \log(R) ) $ samples and $\tilde{\mathcal{O}}(\log(p) R \log^2(R))$ computational steps. 
\end{prop}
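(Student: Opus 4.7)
The plan is to combine the probabilistic estimate of Proposition \ref{1DsFFT:findsupport:findsupport_proof}, which controls the size of $\phi^{(k)}_n(Q)$ at non-support locations under random shuffling, with an amplification step and a union bound over $\mathcal{M}_k$. First I would fix a single $Q$ drawn uniformly from $\mathcal{Q}(M_k)$ and partition $\mathcal{M}_k = (\mathcal{M}_k\cap\mathcal{S}_k)\cup(\mathcal{M}_k\cap\mathcal{S}_k^c)$. For $n\in\mathcal{M}_k\cap\mathcal{S}_k$, the value $\phi^{(k)}_{\tilde n(Q)}(Q)$ evaluated at the nearest $K$-grid point $\tilde n(Q)$ to $(nQ)\bmod M_k$ is deterministically at least of order $\mu$: the nonnegative mass $\hat{f}^{(k)}_n$ contributes a Gaussian bump centered at $(nQ)\bmod M_k$, and nonnegativity of $\hat{f}$ prevents any of the other bumps in Eq.\eqref{support:smoothFT} from cancelling this contribution. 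For $n\in\mathcal{M}_k\cap\mathcal{S}_k^c$, Proposition \ref{1DsFFT:findsupport:findsupport_proof} applies and shows that the probability (over the random choice of $Q$) that $\phi^{(k)}_{\tilde n(Q)}(Q)$ exceeds a threshold $\tau\in(0,\mu)$ is bounded by some constant $q<1$.

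Next I would run $T=\mathcal{O}(\log(|\mathcal{M}_k|/p))$ independent trials, each using a fresh $Q_t$, and accept $n\in\mathcal{M}_k$ into the estimated aliased support if and only if $\phi^{(k)}_{\tilde n(Q_t)}(Q_t) > \tau$ in a strict majority of trials. True support points pass every trial with probability $1$; non-support points fail each trial independently with constant bias, so a Chernoff bound gives a per-point failure probability of $e^{-cT}$. A union bound over $|\mathcal{M}_k|=\mathcal{O}(\rho R\log R)$ candidates then yields total failure probability at most $p$ provided $T=\Theta(\log(R/p))$, which is absorbed into the $\tilde{\mathcal{O}}$ notation.

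For the complexity I would account for each of the four steps per trial. The permutation and diagonal Gaussian filter (steps 1--2) only need to touch the $\mathcal{O}(K)=\tilde{\mathcal{O}}(R\log R)$ samples on which the filter is non-negligible, given the choice of $\sigma$ in Algorithm \ref{RASR:recoveraliasedsupp}; both the sample count and arithmetic cost are $\tilde{\mathcal{O}}(R\log R)$. Step 3 is a size-$K$ FFT, costing $\tilde{\mathcal{O}}(R\log^2 R)$. Step 4 scans each of the $|\mathcal{M}_k|=\tilde{\mathcal{O}}(R\log R)$ candidates at unit cost per lookup in the FFT output. Multiplying by $T=\tilde{\mathcal{O}}(\log(p^{-1}))$ trials gives the advertised totals of $\tilde{\mathcal{O}}(\log(p^{-1})\, R\log R)$ samples and $\tilde{\mathcal{O}}(\log(p^{-1})\, R\log^2 R)$ operations.

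The main obstacle I anticipate is the deterministic lower bound at true-support points: even though nonnegativity of $\hat{f}$ rules out cancellation, one still must show that evaluating at the nearest $K$-grid point $\tilde n(Q)$ rather than at the exact center $(nQ)\bmod M_k$ only attenuates the bump by a constant. This forces a careful calibration of $\sigma$ against the grid spacing $M_k/K$ and against the minimum separation guaranteed by the shuffling lemma (Lemma \ref{1DsFFT:findsupport:shuffle}), so that neighboring bumps cannot flood the test threshold while the correct bump survives at height $\Omega(\mu)$. A secondary subtle point is showing that uniformity over $\mathcal{Q}(M_k)$ suffices for the shuffling estimate in Eq.\eqref{support:permutation}; this is handled by the standard lower bound $|\mathcal{Q}(M_k)|/M_k = \Omega(1/\log\log M_k)$, contributing only a $\log\log$ factor absorbed into $\tilde{\mathcal{O}}$.
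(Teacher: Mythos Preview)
Your argument is correct, but it diverges from the paper in one place and is more elaborate than needed. Algorithm~\ref{RASR:recoveraliasedsupp} does \emph{not} use a majority vote: it initializes $\mathcal{S}_k\leftarrow\mathcal{M}_k$ and removes an index the first time it falls below threshold in \emph{any} trial, so an index survives only if it passes \emph{all} $L$ trials. This works precisely because, as the second half of Proposition~\ref{1DsFFT:findsupport:findsupport_proof} already states, a true support point passes the threshold test \emph{almost surely} for every choice of $Q$; hence no false negatives are possible, and one can use the stricter ``all trials'' rule. For a non-support index the $L$ trials are independent and each passes with probability at most $\alpha$, so the survival probability is at most $\alpha^L$ directly---no Chernoff bound is required. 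Setting $L=\log_\alpha(p)$ (plus a union bound over $|\mathcal{M}_k|$, absorbed into $\tilde{\mathcal{O}}$) finishes the probabilistic part.

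The ``main obstacle'' you flag---the deterministic lower bound at true-support points after rounding to the nearest $K$-grid node---is exactly what the second part of Proposition~\ref{1DsFFT:findsupport:findsupport_proof} (via the second part of Lemma~\ref{1DsFFT:findsupport:filter_lemma}) already supplies, given the calibration of $\sigma$ and $K$ in Algorithm~\ref{RASR:recoveraliasedsupp}. You cite Proposition~\ref{1DsFFT:findsupport:findsupport_proof} for the non-support case; you should cite it for the support case as well rather than re-derive it. Your complexity accounting matches the paper's.
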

\begin{proof}
Refer to Algorithm \ref{RASR:recoveraliasedsupp} and Proposition
\ref{1DsFFT:findsupport:findsupport_proof} as well as the above discussion. 
\end{proof}

\begin{algorithm}[H]
\caption{ $\mathrm{FIND\_ALIASED\_SUPPORT}(\mathcal{M}_{k}, M_k, K,\alpha,p, \delta, \mu, \Delta )$}
\label{RASR:recoveraliasedsupp}
\begin{algorithmic}[1]
\State Let $\sigma = \frac{\alpha \, \frac{M_k}{2R} }{\sqrt{  \log\left ( \frac{2 R \Delta}{\delta} \right )}} $ and $L = \log_\alpha (p) $.
\State $\mathcal{S}_k \leftarrow \mathcal{M}_k$
\For{ $l$ {\bf from} $1$ {\bf to} $L$}
	\State	Pick $Q^{(l)} \in \mathcal{Q}(M_k)$ uniformly at random.
	\State Compute: $\phi^{(k)}_{j \frac{M_k}{K} } ( Q^{(l)} ), \; j = 0, 1, ..., K-1$ (Eq.\eqref{RASR:maineq}). 
	 	\For{ $n \in \mathcal{M}_k$  }
			\If{ $\left |\phi^{(k)}_{ \left [ \left ( n Q^{(l)} \, \mathrm{mod} \, M_k    \right )\frac{K}{M_k}  \right ]\, \frac{M_k}{K} }( Q^{(l)} )  \right  |  < \frac{\delta \, \mu}{2} $ }
				\State Remove $j$ from $\mathcal{S}_k$.
			\EndIf
		\EndFor
\EndFor
\State Output: $\mathcal{S}_k$.
\end{algorithmic}
\end{algorithm}

\subsection{Recovering values from knowledge of the support}
\label{1DsFFT:CV}
In this section, assume a set size $\mathcal{O}(K)$ containing the support
$\mathcal{S}$ has been recovered.  We now show how the
values of the nonzero Fourier coefficients 
of $\hat{f}$ 
in Eq.~\eqref{1DsFFT:orig_func} can be rapidly computed using this information. For
this purpose, assume $f(x)$ can be sampled at locations: $\left \{
\frac{q \, \mathrm{mod} \, P^{(t)}}{ P^{(t)}}   \right
\}_{q=0}^{ P^{(t)}-1}$ for $t = 0 , 1, ..., T$, and $\{ P^{(t)}
\}_{t=1}^T $ some random prime numbers on the order of $\mathcal{O}(R
\log_R(N) )$ (see Algorithm \ref{1DsFFT:compute_values}). It follows
that 
\begin{equation}
\label{1DsFFT:CV:system_eqn}
f^{(t)}_{q  \, \mathrm{mod} \, P^{(t)}; \, P^{(t)}} = \sum_{j
  \in \mathcal{S}} e^{-2\pi i \,\frac{q \, ( j \,
    \mathrm{mod} \, P^{(t)} ) }{P^{(t)}} }\,\hat{f}_{ j }  =
\sum_{l=0}^{P^{(t)} -1} e^{-2\pi i \,\frac{q\,l}{P^{(t)}} } \, \left (
\sum_{j \in \mathcal{S}:  j  \,
  \mathrm{mod} \, P^{(t)} = l }  \hat{f}_j  \right ) 
\end{equation}
for $t = 0,1, ..., T$. The outer sum is seen to be a DFT of size $
P^{(t)}$ of a shuffled and aliased vector, whereas the inner sum can
be expressed as the application of a binary matrix $B^{(t)}_{q,j}$
with entries 
\begin{equation*}
B^{(t)}_{q,j} =
\left\{
	\begin{array}{ll}
		1  & \mbox{if } \, j  \, \mathrm{mod}\, P^{(t)} = q   \\
		0 & \mbox{else}
	\end{array}
\right.
\end{equation*}
to the vector with entries' index corresponding to those of the support of
$\hat{f}$. In particular, each such matrix is sparse with exactly $\#
\mathcal{S} = \mathcal{O}(R)$ nonzero
entries. Eq.~\eqref{1DsFFT:CV:system_eqn} can further be written in
matrix form as 
\begin{equation}
\label{1DsFFT:CV:system}
 [FB] \, \hat{f} =  \left[ \begin{array}{cccc}
F^{(1)} & 0  & ...  & 0   \\
0 & F^{(2)} & ... & 0  \\
... & ... & ... & ... \\
0 & 0 & ... & F^{(T)}   \end{array} \right] \,
 \left[ \begin{array}{c}
B^{(1)}   \\
B^{(2)}  \\
...  \\
B^{(T)}  \end{array} \right]   \, \hat{f} = 
 \left[ \begin{array}{c}
f^{(1)}   \\
f^{(2)}  \\
...  \\
f^{(T)}  \end{array} \right]  = f_0,
\end{equation}
where $F^{(t)}$ is a standard DFT matrix of size $P^{(t)}$. 
Proposition \ref{1DsFFT:CV:Neumann} states that if $T = \mathcal{O}(1)$ is sufficiently large, then with nonzero probability $\frac{1}{T}
(FB)^* (FB) = I + \mathcal{P}$, where $I$ is the identity and
$\mathcal{P}$ is a perturbation with $2$-norm smaller than
$\frac{1}{2}$. When this occurs, one can solve the linear system
through the Neumann series, i.e.,
$$\hat{f} = \sum_{n=0}^\infty  (I - B^* B )^n\, (FB)^* f_0$$
This constitutes the core of Algorithm \ref{1DsFFT:compute_values}. The correctness of the algorithm is provided in Proposition~\ref{1DsFFT:CV:alg3_correctness}.

Since each matrix $B^{(t)}$ contains exactly $R$
nonzero entries, both $B$ and $B^* B$ can be applied in order $RT =
\mathcal{O}(R \log_R (N))$ steps. In 
addition, since $F$ is a block diagonal matrix with $T = \mathcal{O}(1)$ blocks consisting of DFT matrices of size $\mathcal{O}(R\log_R(N))$, it can be applied in order $\tilde{\mathcal{O}}(R \log (N))$ thanks to the FFT. Finally, for an accuracy $\eta$ the Neumann series can be truncated after $\mathcal{O}(\log(\eta))$ terms, and the process needs to be repeated at most $\log(p)$ times for a probability $p$ of success. Therefore, the cost of
computing the nonzero values of $\hat{f}$ is bounded by
$\tilde{\mathcal{O}}\left ( (\log(p) + \log(\eta)) \, R \log (N)  \right )$ and uses at most $\mathcal{O}\left ( (\log(p) + \log(\eta)) R \log_R (N) \right )$ samples as claimed.

\begin{restatable}{prop}{ValRecProp}
\label{1DsFFT:CV:alg3_correctness}
Assume the support $\mathcal{S}$ of $\hat{f}$ is known. Then Algorithm
\ref{1DsFFT:compute_values} outputs an approximation to the nonzero
elements of $\hat{f}$ with error bounded by $\eta$ in the
$\ell^2$-norm, with probability greater than or equal to $1-p$ using  $\mathcal{O}\left ( (\log(p) + \log(\eta)) R \log_R (N) \right )$ samples and $\tilde{\mathcal{O}}\left ( (\log(p) + \log(\eta)) \, R \log (N)  \right )$ computational steps. 
\end{restatable}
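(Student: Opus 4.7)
My plan is to build the proof in three layers: (i) reduce correctness to the spectral bound of Proposition~\ref{1DsFFT:CV:Neumann}, (ii) analyze truncation of the Neumann series, and (iii) amplify the constant success probability to $1-p$ while accounting for cost.

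First, I would invoke Proposition~\ref{1DsFFT:CV:Neumann} to obtain that, for a single random draw of primes $\{P^{(t)}\}_{t=1}^{T}$ with $T=\mathcal{O}(1)$ and each $P^{(t)}=\mathcal{O}(R\log_R N)$, the event
\[
  E \;:=\; \Bigl\{ \tfrac{1}{T}(FB)^{*}(FB) \;=\; I + \mathcal{P},\quad \|\mathcal{P}\|_{2} \leq \tfrac{1}{2} \Bigr\}
\]
occurs with some absolute constant probability $\pi_{0}>0$. Conditional on $E$, the normal equations $\tfrac{1}{T}(FB)^{*}(FB)\,\hat f = \tfrac{1}{T}(FB)^{*} f_{0}$ admit the absolutely convergent Neumann expansion $\hat f = \sum_{n\geq 0} (-\mathcal{P})^{n}\,\tfrac{1}{T}(FB)^{*} f_{0}$. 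Truncating after $N_{\eta}$ terms and using $\|\mathcal{P}\|_{2}\leq 1/2$ and $\|\tfrac{1}{T}(FB)^{*} f_{0}\|_{2}\leq \|\hat f\|_{2}(1+\|\mathcal{P}\|_{2})$, the tail is bounded geometrically and the relative $\ell^{2}$ error is at most $(1/2)^{N_{\eta}}\,\mathcal{O}(\|\hat f\|_{2})$, so taking $N_{\eta}=\mathcal{O}(\log(\eta^{-1}))$ yields the desired $\eta$-accuracy on the event $E$.

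Next, to boost the success probability, I would run $L$ independent draws of the random primes (and of $f_{0}$). Since each trial succeeds independently with probability at least $\pi_{0}$, choosing $L=\Theta(\log(p^{-1}))$ makes the probability that all trials fail bounded above by $(1-\pi_{0})^{L}\leq p$. One then selects a good trial by a residual check: for each candidate $\hat f^{(l)}$ produced by truncated Neumann iteration we compute $\|FB\hat f^{(l)}-f_{0}\|_{2}$ using the sparsity of $B$ and FFTs on the blocks of $F$; on $E$ the residual is forced below the truncation threshold, while on $E^{c}$ ill-conditioning prevents such convergence, so the minimizing trial satisfies the accuracy guarantee.

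Finally, the cost accounting is routine once these ingredients are in place. Each $B^{(t)}$ has exactly $R$ nonzeros, so applying $B$ or $B^{*}$ costs $\mathcal{O}(RT)=\mathcal{O}(R\log_{R}N)$; the block-diagonal $F$ consists of $T=\mathcal{O}(1)$ DFTs of size $\mathcal{O}(R\log_{R}N)$, applied in $\tilde{\mathcal{O}}(R\log N)$ via the FFT. Each Neumann step therefore costs $\tilde{\mathcal{O}}(R\log N)$, with $\mathcal{O}(\log(\eta^{-1}))$ steps per trial and $\mathcal{O}(\log(p^{-1}))$ trials, yielding the claimed $\tilde{\mathcal{O}}((\log(p^{-1})+\log(\eta^{-1}))\,R\log N)$ runtime and $\mathcal{O}((\log(p^{-1})+\log(\eta^{-1}))\,R\log_{R}N)$ sample count. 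The main obstacle I expect is the verification step: because we only have constant-probability success per trial, establishing that a residual-based (or, alternatively, coordinate-wise median) test reliably identifies a good trial under noiseless conditions---and does so without inflating the sample or time budget---is the subtle point. Everything else reduces to the geometric-series bound and the black-box Neumann spectral estimate.
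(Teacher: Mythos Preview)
Your proposal follows essentially the same route as the paper: invoke Proposition~\ref{1DsFFT:CV:Neumann} for a constant-probability spectral bound, truncate the Neumann series geometrically for the $\eta$-accuracy, and repeat independent draws to amplify to probability $1-p$. The paper's proof is in fact terser than yours and does not explicitly address the verification issue you flag.

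One small but relevant difference concerns how a good trial is identified and how this affects the cost. You propose to compute the full truncated Neumann iterate $\hat f^{(l)}$ for \emph{each} of the $L=\Theta(\log p^{-1})$ trials and then select by residual; as written this incurs $\mathcal{O}(\log p^{-1}\cdot\log\eta^{-1})$ Neumann steps, i.e.\ a product rather than the additive $\log p^{-1}+\log\eta^{-1}$ in the statement. The paper's Algorithm~\ref{1DsFFT:compute_values} avoids this by doing a cheap contractivity test first (line~6: essentially checking whether $\|(I-\tfrac{1}{T}B^{*}B)\hat f_{0}\|_{2}$ is small relative to $\|\hat f_{0}\|_{2}$) and running the Neumann iteration only on the first trial that passes. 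This yields the additive dependence: sampling and testing cost $\mathcal{O}(\log p^{-1}\cdot R\log_{R}N)$, and the single Neumann solve costs $\tilde{\mathcal{O}}(\log\eta^{-1}\cdot R\log N)$. Your residual idea is fine in principle, but to match the stated complexity you should likewise test before iterating rather than after.
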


\begin{algorithm}[H]
\caption{$\mathrm{COMPUTE\_\,VALUES}(\mathcal{S},R,N,p,\mu,\Delta,\eta)$}
\label{1DsFFT:compute_values}
\begin{algorithmic}[1]
\State Let $T =  4 $, $Z = \lceil \log_{\frac{1}{2} } (\eta) \rceil  $ and $L = \lceil \log_{\frac{1}{2}} (p) \rceil$.
\For{ $t$ {\bf from} $1$ {\bf to} $L$}
	\State Pick $\{ P^{(t)} \}_{t=1}^{T}$ i.i.d. uniform r.v. chosen among the set containing the smallest $ 4R \log_R(N)$ prime numbers greater than $R$.
	\State Sample $\left \{ f_{ n  \, \mathrm{mod} \, P^{(t)} ; P^{(t)}} \right  \}_{n=0}^{P^{(t)}-1}, \; t = 1,..., T$
	\State Compute $\hat{f}_0 \leftarrow (FB)^* f_0$ (Eq.\eqref{1DsFFT:CV:system_eqn}).
	\If{ $|| (B^* B ) \, \hat{f}_0 ||_2 < \frac{1}{2}$ }
		\State 	$ \hat{f}  \leftarrow \sum_{n=0}^{Z} (I-B^* B )^n \hat{f}_0$ 
		\State Output: $\hat{f}$.
		\State Exit.
	\EndIf
\EndFor
\end{algorithmic}
\end{algorithm}
%
%

\subsection{Stability to low-level noise}
\label{stability}

As discussed previously, the theory underlying the algorithms
introduced in Section~\ref{1DsFFT} has been designed for vectors which
are exactly sparse. In this section, we discuss the effect of low-level noise. In fact, we show that if the sparse vector
of Fourier coefficients takes the form $ \hat{f} + \hat{\nu} $, where
$\sqrt{N} || \hat{\nu} ||_{2} < \eta$ for some ``small''
$\eta$, the sMFFT algorithm recovers the support and values of
$\hat{f}$ with the same guarantees as described earlier.  

\subsubsection{Support recovery} 
The most important quantity for the fast recovery of
the support is Eq.\eqref{RASR:maineq}, so in the 
presence of noise, 
\begin{equation}
\label{noiseMainEq}
 \phi^{(k)}_n (Q) =  \frac{1}{M_k} \sum_{m \in \mathcal{A}(K,M_k)}  e^{2 \pi i \frac{ n  m }{K} }  \, g_\sigma \left ( \frac{m}{M_k} \right )\,  \left ( f_{ (m [Q]^{-1}_{M_k}  ) \, \mathrm{mod} \, M_k; \, M_k}  + \nu_{ (m [Q]^{-1}_{M_k} ) \, \mathrm{mod} \, M_k; \, M_k}  \right ).
\end{equation}
The second term in this expression is the error term and can be
\emph{uniformly bounded} by the following lemma:
\begin{restatable}{lemma}{stabilitysuppthm}
\label{stability:noisesupport}
Assuming the noise term $\hat{\nu}$ is such that $||\hat{\nu} ||_{2} <
\frac{\eta}{\sqrt{N}} $, the error term of the computed value in
Eq.\eqref{noiseMainEq} is uniformly bounded by 
\begin{equation*}
\left | \left |  \psi^{(k)}_n (Q)   \right | \right |_\infty = \left | \left |   \frac{1}{M_k} \sum_{m \in \mathcal{A}(K,M_k)}  e^{2 \pi i \frac{ n  m }{K} }  \, g_\sigma \left ( \frac{m}{M_k} \right )\,  \nu_{ (m [Q]^{-1}_{M_k} ) \, \mathrm{mod} \, M_k; \, M_k}    \right | \right |_\infty  <  \mathcal{O}(\eta).
\end{equation*}
\end{restatable}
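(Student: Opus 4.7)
The plan is to pull the absolute value inside the sum defining $\psi^{(k)}_n(Q)$ and control the noise samples and the Gaussian weights separately. By the triangle inequality, for every $n$,
\begin{equation*}
\bigl|\psi^{(k)}_n(Q)\bigr| \;\leq\; \frac{1}{M_k}\left(\sum_{m \in \mathcal{A}(K,M_k)} \left|g_\sigma\!\left(\tfrac{m}{M_k}\right)\right|\right) \cdot \max_{m'}\,\bigl|\nu_{m';\,M_k}\bigr|.
\end{equation*}
So the task reduces to establishing two ingredients: a uniform pointwise bound on the aliased noise samples, and an $\ell^{1}$-type estimate on the discrete Gaussian filter restricted to $\mathcal{A}(K,M_k)$.

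First, I would obtain a uniform pointwise bound $|\nu_{m';\,M_k}| < \eta$. Writing the sample as a spectral inner product, $\nu_{m';\,M_k} = \sum_{j=0}^{N-1} e^{-2\pi i\, m' j/M_k}\,\hat{\nu}_j$, and applying Cauchy--Schwarz against the unit-modulus vector $(e^{-2\pi i\, m' j/M_k})_j$ yields $|\nu_{m';\,M_k}| \leq \sqrt{N}\,\|\hat{\nu}\|_{2} < \eta$ uniformly in $m'$ by hypothesis. Because $m \mapsto (m\,[Q]^{-1}_{M_k})\bmod M_k$ is a bijection on $\{0,\dots,M_k-1\}$, the same bound applies to the permuted noise samples appearing in Eq.~\eqref{noiseMainEq}.

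Second, I would prove that $\sum_{m \in \mathcal{A}(K,M_k)} |g_\sigma(m/M_k)| = \mathcal{O}(M_k)$. Unfolding the periodization in Eq.~\eqref{support:filterdef}, for $m \in \mathcal{A}(K,M_k)$ the $h=0$ term dominates (the terms with $h\neq 0$ contribute a geometrically decaying tail because $|m+hM_k|/M_k \geq |h|-\tfrac{1}{2}$), so the sum is bounded by a constant times $\sqrt{\pi}\sigma \sum_{|m|\leq K/2} e^{-\pi^2 \sigma^2 m^2/M_k^2}$. Comparing this Riemann sum to its continuous counterpart $\int_{\mathbb{R}} e^{-\pi^2\sigma^2 x^2/M_k^2}\,dx = M_k/(\sqrt{\pi}\sigma)$ yields the bound $\mathcal{O}(M_k)$. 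Combining the two ingredients gives $\max_n|\psi^{(k)}_n(Q)| = \mathcal{O}(\eta)$, as claimed.

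The only subtle point is the second estimate: one must verify that the truncation to $\mathcal{A}(K,M_k)$ captures the effective mass of the Gaussian so that the Riemann-sum comparison with the integral is tight. This follows from the algorithmic choices in Algorithm~\ref{RASR:recoveraliasedsupp}, where $\sigma$ is taken proportional to $M_k/R$ up to logarithmic factors, placing the Gaussian width $M_k/(\pi\sigma) \sim R$ comfortably inside the window of half-width $K/2$. Controlling the tails from $h\neq 0$ in the periodization and the discretization error then constitutes the only routine technical work.
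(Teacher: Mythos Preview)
Your argument is correct and follows the same overall strategy as the paper's proof: apply the triangle inequality to split the estimate into a uniform bound on the noise samples times an $\ell^{1}$-type bound on the filter weights, then invoke $\|\hat{\nu}\|_{2}<\eta/\sqrt{N}$. The paper obtains the noise bound by citing the Hausdorff--Young inequality $\|\nu\|_{\infty}\le\|\hat{\nu}\|_{1}$ followed by $\|\hat{\nu}\|_{1}\le\sqrt{N}\,\|\hat{\nu}\|_{2}$, which is your Cauchy--Schwarz step written in two lines. For the filter, the paper is slightly cruder: instead of the Riemann-sum comparison you propose, it simply bounds $\sum_{m\in\mathcal{A}(K,M_{k})}|g_{\sigma}(m/M_{k})|$ by $\#\mathcal{A}(K,M_{k})\cdot\sup_{m}|g_{\sigma}(m/M_{k})|\le K\cdot\sqrt{\pi}\,\sigma$ and then appeals to the parameter choices $\sigma=\mathcal{O}(M_{k}/(R\sqrt{\log R}))$ and $K=\mathcal{O}(R\sqrt{\log R})$ to see that $K\sigma=\mathcal{O}(M_{k})$. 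Your integral comparison reaches the same $\mathcal{O}(M_{k})$ conclusion and is arguably cleaner, since it does not require tracking the precise relationship between $K$ and $\sigma$; on the other hand, the paper's count-times-max bound avoids the periodization-tail and discretization-error bookkeeping you flag in your last paragraph. Either route closes the lemma.
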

Algorithm \ref{1DsFFT:findsupport_alg} tests whether $\left |
\phi^{(k)}_{\left [ (i Q^{(l)} \, \mathrm{mod} M_k)\,  \frac{K}{M_k} \right ] \frac{M_k}{K} }  (Q^{(l)}
)\right  |  > \delta \, \mu $ in order to discriminate between elements
of the candidate and aliased supports. The presence of noise can skew
this test in two ways: 1) by bringing the computed value below the
threshold when $i \in \mathcal{S}_k$ or 2) by bringing the value above
the threshold multiple times when $i \not \in \mathcal{S}_k$. Either
way, if $\eta$ is small enough, i.e., such that $\left | \left |   \psi^{(k)}_n
(Q^{(l)})     \right | \right |_\infty \leq \frac{\delta \mu}{2}$,
it can be shown that the conclusion of
Proposition \ref{1DsFFT:findsupport:findsupport_proof} follows through
with similar estimate, by simply replacing $\delta  $ with
$\frac{\delta  }{2}$ in the proof.  

\subsubsection{Recovering values from knowledge of the support} 
It is quickly
observed that the recovery of the values is a 
well-conditioned problem. Indeed, since $\frac{1}{T} (FB)^* (FB) = I
- \mathcal{P}$, and $ || \mathcal{P} ||_2 \leq \frac{1}{2} $ with high 
probability by Proposition \ref{1DsFFT:CV:Neumann}, a simple argument
based on the singular value decomposition produces the following corollary,
\begin{cor}
Under the hypothesis of Proposition~\ref{1DsFFT:CV:Neumann}, $\left (
  \frac{1}{T} (FB)^* (FB) \right )^{-1}$ exists, and $\left | \left | \left (\frac{1}{T} (FB)^* (FB) \right )^{-1} \right | \right |_2 \leq 2$
with probability greater than or equal to $\frac{1}{2}$.
\end{cor}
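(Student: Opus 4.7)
The plan is a direct singular value (or equivalently Neumann series) argument once the hypothesis of Proposition~\ref{1DsFFT:CV:Neumann} is invoked. First I would apply that proposition to get, with probability at least $1/2$, a decomposition of the form
\[
\tfrac{1}{T}(FB)^*(FB) = I - \mathcal{P}, \qquad \| \mathcal{P} \|_2 \le \tfrac{1}{2}.
\]
Everything else happens deterministically on this event.

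Next I would observe that $A := \tfrac{1}{T}(FB)^*(FB)$ is Hermitian positive semidefinite by construction, so its singular values coincide with its eigenvalues. Since $\mathcal{P} = I - A$ is also Hermitian, Weyl's inequality (or simply the identity $\lambda_i(A) = 1 - \lambda_i(\mathcal{P})$ after simultaneous diagonalization) gives
\[
| \lambda_i(A) - 1 | \le \|\mathcal{P}\|_2 \le \tfrac{1}{2},
\]
so every eigenvalue of $A$ lies in $[\tfrac{1}{2}, \tfrac{3}{2}]$. In particular $A$ is invertible and
\[
\| A^{-1} \|_2 = \frac{1}{\min_i \lambda_i(A)} \le 2.
\]

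As an alternative wrap-up (which I might include as a sanity check), one can avoid the spectral argument entirely and expand the Neumann series $A^{-1} = \sum_{n \ge 0} \mathcal{P}^n$, which converges in operator norm because $\|\mathcal{P}\|_2 \le 1/2 < 1$, and bound it termwise by the geometric series $\sum_{n \ge 0} (1/2)^n = 2$. Either route yields the same constant.

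There is no real obstacle here: the only non-trivial ingredient is Proposition~\ref{1DsFFT:CV:Neumann}, which is assumed. The minor thing worth stating carefully is that $A$ is Hermitian (not merely that $\|\mathcal{P}\|_2$ is small for some non-Hermitian perturbation) so that the eigenvalue bound translates directly into the operator-norm bound on $A^{-1}$; this is immediate from $A = M^* M / T$ with $M = FB$. The $1/2$ probability is simply inherited from the proposition's event.
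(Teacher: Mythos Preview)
Your argument is correct and matches the paper's own route: the paper simply says the corollary follows by ``a simple argument based on the singular value decomposition,'' which is exactly your eigenvalue/spectral bound (and your Neumann-series alternative is the same estimate in disguise). There is nothing to add.
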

Therefore, the output of Algorithm~\ref{1DsFFT:compute_values} is such
that
\begin{equation*}
|| \hat{f}^{\mathrm{sMFFT}} - \hat{f} ||_2 \leq \left | \left |  \left
( \frac{1}{T} (FB)^* (FB) \right )^{-1} \right | \right |_{2} \, ||
(FB)^* \nu ||_{2} \leq 2 ||B||_2  \, || \nu ||_{2} =
\mathcal{O}(\eta).
\end{equation*}
This, together with Proposition \ref{1DsFFT:CV:alg3_correctness}, demonstrates the stability of Algorithm \ref{1DsFFT:compute_values} in the noisy case.

\section{The multi-dimensional sparse FFT}
\label{MD}
Whenever dealing with the multidimensional DFT/FFT, it is
assumed that  the function of interest is both periodic and
bandlimited with fundamental period $[0,1)^d$, i.e., 
$$f(x) = \sum_{j
    \in ( [0,M) \cap \mathbb{Z} )^d }   e^{-2\pi i \, x \cdot j} \,
    \hat{f}_j$$
 for some finite $M \in \mathbb{N}$ and $j \in  \mathbb{Z}^d , $ up to some rescaling. Computing 
the Fourier coefficients is then equivalent to computing the
$d$-dimensional integrals,
 $$\hat{f}_n = \int_{[0,1]^d} e^{-2\pi i \, n
  \cdot x} \, f(x) \, \mathrm{d} x,$$
  and this is traditionally
achieved through a ``dimension-by-dimension'' trapezoid rule
\cite{van1992computational,chu1999inside} 
\begin{equation}
\label{MD:funcMDperiodic}
\hat{f}_{(j_1, j_2, ..., j_d)} =  \sum_{n_1 = 0}^{M - 1} \frac{e^{2\pi i \,
  \frac{j_1 n_1}{M}}}{M} \left ( ... \left ( \sum_{n_{d-1} = 0}^{M - 1}
    \frac{e^{2\pi i \, \frac{j_{d-1} n_{d-1}}{M}}}{M} \,  \left ( \sum_{n_d =
        0}^{M - 1} \frac{e^{2\pi i \, \frac{j_d n_d}{M}}}{M}  f_{(n_1, n_2, ...,
        n_d)}  \right )   \right )  \right ) .
\end{equation}
However, Proposition~\ref{MD:rank1lemma} below shows that it is also possible to
re-write the $d$-dimensional DFT as that of a 1D function
with Fourier coefficients equal to those of the original function,
but with different ordering.
\begin{restatable}{prop}{mdsfft}
\label{MD:rank1lemma}{(Rank-1 $d$-dimensional DFT)}
Assume the function $f:[0,1)^d \rightarrow \mathbb{C}$ has form
  \eqref{MD:funcMDperiodic}. Then, 
\begin{equation}
\label{MD:rank1}
 \int_{[0,1]^d} e^{-2\pi i \, j \cdot x }\, f(x) \, \mathrm{d} x = \frac{1}{N} \sum_{n=0 }^{N-1} e^{-2\pi i \, j \cdot x_n} \, f(x_n) 
\end{equation}
for all $j \in  [0,M)^d \cap \mathbb{Z}^d $, where $x_n = \frac{n g \, \mathrm{mod} N}{N}$, $g = (1, M, M^2, ..., M^{d-1})$ and $ N = M^d$.
\end{restatable}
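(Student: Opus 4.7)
The plan is to establish \eqref{MD:rank1} by expanding $f$ in its Fourier series, swapping the order of summation with the discrete average on the right-hand side, and reducing the identity to a single discrete orthogonality relation on the rank-$1$ lattice $\{ng\,\mathrm{mod}\,N\}_{n=0}^{N-1}$. By continuous orthogonality of the exponential characters on $[0,1]^d$, the left-hand integral evaluates to the Fourier coefficient $\hat f_j$, so the task reduces to showing that the right-hand discrete average also collapses to $\hat f_j$.

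Substituting the Fourier expansion of $f$ from \eqref{MD:funcMDperiodic}, using that $x_n=ng/N\,\mathrm{mod}\,1$ so the exponential is insensitive to the integer part, and interchanging the two finite sums, the right-hand side takes the form
$$\sum_{k\in([0,M)\cap\mathbb Z)^d}\hat f_k\cdot\frac{1}{N}\sum_{n=0}^{N-1}e^{2\pi i\,n\,(k-j)\cdot g/N},$$
up to the sign convention used in \eqref{MD:funcMDperiodic}. The inner geometric progression is a complete character sum on $\mathbb Z/N\mathbb Z$: it equals $1$ when $(k-j)\cdot g\equiv 0\pmod N$ and $0$ otherwise. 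Thus the right-hand side reduces to the sum of $\hat f_k$ over the index set singled out by this congruence.

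The main---and essentially only---obstacle is the combinatorial claim that, for $k\in([0,M)\cap\mathbb Z)^d$, the congruence $(k-j)\cdot g\equiv 0\pmod{M^d}$ admits the unique solution $k=j$. With $g=(1,M,M^2,\ldots,M^{d-1})$ and $N=M^d$, the congruence reads $\sum_{\ell=1}^d(k_\ell-j_\ell)\,M^{\ell-1}\equiv 0\pmod{M^d}$. Each difference $k_\ell-j_\ell$ lies in $(-M,M)$, so the absolute value of the left-hand sum is strictly less than $M^d$; this upgrades the congruence to an actual equality to zero. Uniqueness of the signed base-$M$ representation then pins down the coordinates one at a time: reducing mod $M$ fixes $k_1=j_1$, dividing through by $M$ and iterating fixes $k_2,\ldots,k_d$ in turn. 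Hence only the term $\hat f_j$ survives on the right-hand side, matching the left-hand side and completing the proof.
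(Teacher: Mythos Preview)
Your proposal is correct and follows essentially the same route as the paper's proof: expand $f$, swap the finite sums, evaluate the inner geometric (Dirichlet) sum, and then use the bound $|(k-j)\cdot g|<M^d=N$ to force $(k-j)\cdot g=0$ and hence $k=j$. The only cosmetic difference is that you finish with an explicit signed base-$M$ uniqueness argument, whereas the paper isolates the $k=j$ term first and shows the remaining index set is empty.
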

Now, the right-hand side of Eq.~\eqref{MD:rank1} can be written in two different ways (due to
periodicity); namely,  
\begin{equation*}
 \frac{1}{N} \sum_{n=0 }^{N-1} e^{-2\pi i \, j \cdot \frac{n g \,
     \mathrm{mod} N}{N} } \, f\left ( \frac{n g \, \mathrm{mod} N}{N}
 \right )  =  \frac{1}{N} \sum_{n } e^{-2\pi i \, \frac{ (j\cdot g) n 
   }{N} } \, f \left ( \frac{n g}{N}  \right )  
\end{equation*}

Geometrically, the left-hand side represents a quadrature rule with
points $x_n = 
\frac{n g \, \mathrm{mod} N}{N} $ distributed (more-or-less uniformly)
in $[0,1)^d $ (Figure~\ref{MD:Rank1Interpretation}, left; grey dots).
  The right-hand side represents an equivalent quadrature where the
  points $x_n = \frac{n g}{N} $ now lie on a \emph{line} embedded in
  $\mathbb{R}^d$ (Figure~\ref{MD:Rank1Interpretation}, right; grey
  dots). The location at which the lattice (thin black lines)
  intersects represents the standard multidimensional DFT
  samples. 
\begin{figure}[h]
\begin{center} 

\begin{tikzpicture}[scale = 0.725]

\begin{scope}[local bounding box=aa,rotate=-90]
\foreach \x in {0,...,\NBigCellhoriz}{
	\draw[line width = 2.5pt] (\BigLat*\x,0) -- (\BigLat*\x, \NBigCellhoriz*\BigLat);
	}
\foreach \x in {0,...,\NBigCellhoriz}{
	\draw[line width = 2.5pt] (0,\BigLat*\x) -- ( \NBigCellhoriz*\BigLat, \BigLat*\x);
	}
	
\foreach \x in {0,...,\NSmCellhoriz}{
	\draw[line width = 1pt] (\SmLat*\x,0) -- (\SmLat*\x, \NSmCellhoriz*\SmLat);
	}

\foreach \x in {0,...,\NSmCellhoriz}{
	\draw[line width = 1pt] (0,\SmLat*\x) -- ( \NSmCellhoriz*\SmLat, \SmLat*\x);
	}

\foreach \x in {0,...,\NSmCellminus}{	
	\draw[line width = 1.5pt, color=gray] (\x*\SmLat ,0) -- (\x*\SmLat +\SmLat ,\BigLat);
	\foreach \y in {0,...,\NSmCell}{
		\draw[fill=gray] (\x*\SmLat + \y/\NSmCellsq*\BigLat, \y/\NSmCell*\BigLat) circle  (2pt);
	}
}
\end{scope}

\begin{scope}[shift={(6,0)},local bounding box=bb,rotate=-90]
\foreach \x in {0,...,\NBigCellhoriz}{
	\draw[line width = 2.5pt] (\BigLat*\x,0) -- (\BigLat*\x, \NBigCellvert*\BigLat);
	}
\foreach \x in {0,...,\NBigCellvert}{
	\draw[line width = 2.5pt] (0,\BigLat*\x) -- ( \NBigCellhoriz*\BigLat, \BigLat*\x);
	}
	
\foreach \x in {0,...,\NSmCellhoriz}{
	\draw[line width = 1pt] (\SmLat*\x,0) -- (\SmLat*\x, \NSmCellvert*\SmLat);
	}

\foreach \x in {0,...,\NSmCellvert}{
	\draw[line width = 1pt] (0,\SmLat*\x) -- ( \NSmCellhoriz*\SmLat, \SmLat*\x);
	}
	
\draw[line width = 1.5pt, color=gray] (0,0) -- (\NBigCellhoriz*\BigLat,\NBigCellvert*\BigLat);
\foreach \x in {0,...,\NSmCellsq}{
	\draw[fill=gray] (\x/\NSmCellsq*\BigLat, \x/\NSmCell*\BigLat) circle  (2pt);
}

\end{scope}

\end{tikzpicture}

\end{center}
\caption{\footnotesize Geometric interpretation of rank-1 $d$-dimensional DFT in
  2D. The thick black box represents fundamental periodic domain. The grey dots
  represent rank-1 discretization points. The 2D grid represents standard discretization points. Left: the rank-1 $d$-dimensional quadrature
  interpreted as a 2D discretization over the fundamental periodic
  region. Right: the rank-1 $d$-dimensional quadrature interpreted as a uniform
  discretization over a line in $\mathbb{R}^2$. } 
\label{MD:Rank1Interpretation}
\end{figure}
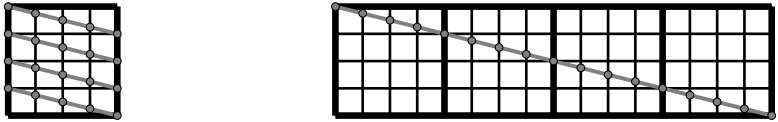
In short, Proposition \ref{MD:rank1lemma} allows one to write \emph{any} $d$-dimensional DFT as a one-dimensional DFT by picking the appropriate sample points (Proposition \ref{MD:rank1lemma}) and proceeding to a re-ordering of the Fourier coefficients through the isomorphism 
\begin{equation*}
  \tilde{n}  : \left \{ n \in ([0,M) \cap \mathbb{Z})^D \right \}
  \rightarrow  n\cdot g = n_0 + n_1\, M + ... + n_{D-1} \, M^{D-1} \in [0, M^D)
  \cap \mathbb{Z}^D. 
\end{equation*}
We use this sort of sampling to treat of the multidimensional problem, which we always convert to a 1D problem. Then, we address the 1D problem through the algorithm presented in the previous sections, and finally map the result back to its original $d$-dimensional space.\\

{\bf Remark. } In general, there is no need for the signal's spectrum to reside on a \emph{regular grid}. Indeed if the signal is sparse in the continuous domain, one can approximate such function on the hypercube by a function which spectrum does lie on a regular grid, and which spectral support correspond to the closest neighbor of the original spectral support within the regular grid. For some error $\epsilon$, a grid with spacing $\mathcal{O}\left( \frac{1}{\epsilon} \right )$ containing $N = \mathcal{O}\left( \frac{1}{\epsilon^d} \right )$ unknowns should suffice.

\section{Numerical results}
\label{Numer}
We have implemented our sMFFT algorithm in
\texttt{MATLAB}\footnote{\texttt{MATLAB} is a trademark of
  Mathworks} and present a few numerical results which exhibit the
claimed 
scaling. All simulations were carried out on a small cluster possessing 4
Intel® Xeon® E7-4860 v2 processors and 256GB of RAM, with the
\texttt{MATLAB}
flag $\mathtt{-singleCompThread}$  to ensure fairness through the use
of a single computational thread. The numerical experiments presented
here fall in two categories: 1) dependence of running time as a
function of the total number of unknowns $N$ for a fixed number of
nonzero frequencies $R$, and 2) dependence of running time as a
function of the number of nonzero frequencies $R$ for a fixed total
number of unknowns $N$. All experiments were carried out in three
dimensions (3D) with additive Gaussian noise with variance $\eta$. The
nonzero 
values of $\hat{f}$ were picked randomly and uniformly at random in $[0.5,
  1.5]$, and the remaining parameters were set according
to Table \ref{results:parameters}. All comparisons are perdormed with
the \texttt{MATLAB} $\mathtt{fftn}(\cdot)$ function, which 
uses a dimension-wise decomposition of the DFT (see Section \ref{MD})
and a 1D FFT routine along each dimension. 
\begin{table}[h]
\scriptsize
\caption{\scriptsize Values of parameters required by Algorithm
  \ref{1DsFFT:algorithm}-\ref{1DsFFT:compute_values} and used for
  numerical experiments} 
\begin{center}
\begin{tabular}{|c|c|c|c|} \hline
  Parameter & Description & Value (Case 1) & Value (Case 2) \\
  \hline
$N$ & Total number of unknowns &variable &  $10^8$\\
$R$ & Number of nonzero frequencies &50 & variable \\
$\alpha$ & Gaussian filter parameter &$0.15$ &  0.15 \\
$\delta$ & Statistical test parameter &$0.1$&  0.1 \\
$p$ & Probability of failure & $10^{-4}$ & $10^{-4}$ \\
$d$ & Ambient dimension & $3$ &  3 \\
$\eta$ & Noise level & $10^{-2}$ &  $10^{-2}$ \\ \hline
\end{tabular}
\end{center}
\label{results:parameters}
\end{table}%
For case 1), we picked $R=50$ nonzero frequencies distributed
uniformly at random on a 3D lattice having $N^{1/3}$ elements in each
dimension for different values of $N\in [10^3, 10^{10}]$. The results
are shown in Figure \ref{results:N_dependence} (left). As can be
observed, the cost of computing the DFT through the sMFFT remains more
or less constant with $N$, whereas that the the \texttt{MATLAB}
$\mathtt{fftn}(\cdot)$ function increases linearly. This is the
expected behavior and demonstrates the advantages of the sMFFT over
the FFT. Also note that the largest relative $\ell^2$-error observed
was $9.3\cdot 10^{-3}$ which is on the order of the noise level, as
predicted by the theory. 

 \begin{figure}[H]
 \begin{center}
 	\pgfplotsset{every axis legend/.append style={at={(0.32,0.95)},anchor=north}}
 	\pgfplotsset{every axis/.append style={thick}}
 	 \begin{tikzpicture}[scale = 0.6]

	\begin{axis}[
		ylabel=Running time - $\log_{10}\left( t\right)$,
		xlabel=Number of unknowns - $\log_{10}\left( N \right)$]
		
	\addplot[color=black,mark=o] coordinates {
(3.6124,-3.2)
(4.5154,-2.82)
(5.4185,-1.78)
(6.3216,-0.88)
(7.2247,0.25)
(8.1278,1.29)
(9.0309,2.23)
	};

	\addplot[color=red,mark=x] coordinates {
(3.6124,-2.34) 
(4.5154,-2.30)
(5.4185,-2.38)
(6.3216,-2.11)
(7.2247,-2.13)
(8.1278,-2.06)
(9.0309,-1.96) 
	};

\legend{%
$\mathtt{fftn}(\cdot) \; \mathrm{(Matlab)}$\\
$\mathrm{sMFFT}$\\%
}

	\end{axis}
\end{tikzpicture}%
	  \begin{tikzpicture}[scale = 0.6]

	\begin{axis}[
		xlabel=Number of nonzero frequencies - $\log_{10}\left( R \right)$,
		ylabel=Running time - $\log_{10}\left( t \right)$]
		
	\addplot[color=black,mark=o] coordinates {
(0.6021,0.3160)
(0.9031,0.3160)
(1.2041,0.3160)
(1.5051,0.3160)
(1.8062,0.3160)
(2.1072,0.3160)
(2.4082,0.3160)
(2.7093,0.3160)
(3.0103,0.3160)
(3.3113,0.3160)
(3.6124,0.3160)
(3.9134,0.3160)
(4.2144,0.3160)
	};

	\addplot[color=red,mark=x] coordinates {
(0.6021,-1.8)
(0.9031,-1.94)
(1.2041,-1.82)
(1.5051,-1.89)
(1.8062,-1.77)
(2.1072,-1.38)
(2.4082,-1.17)
(2.7093,-1.08)
(3.0103,-0.91)
(3.3113,-0.64)
(3.6124,-0.44)
(3.9134,-0.17)
(4.2144,0.12)
	};

\legend{%
$\mathtt{fftn}(\cdot) \; \mathrm{(Matlab)}$\\
$\mathrm{sMFFT}$\\%
}

	\end{axis}
\end{tikzpicture}%
	
	
 	\end{center}
 	\caption{\footnotesize  Left: Running time vs number of unknowns
     ($N$) for the \texttt{MATLAB} $\mathtt{fftn}(\cdot)$ (black) and
     the sMFFT (red) in three dimensions (3D), with $R=50$ nonzeros
     and noise $\eta = 10^{-3}$. Right: Running time vs number of
     nonzero frequencies ($R$) for the \texttt{MATLAB}
     $\mathtt{fftn}(\cdot)$ (black) and the sMFFT (red) in three
     dimensions (3D) and for $N=10^8$ and noise $\eta = 10^{-3}$.} 
 	\label{results:N_dependence}
 \end{figure}
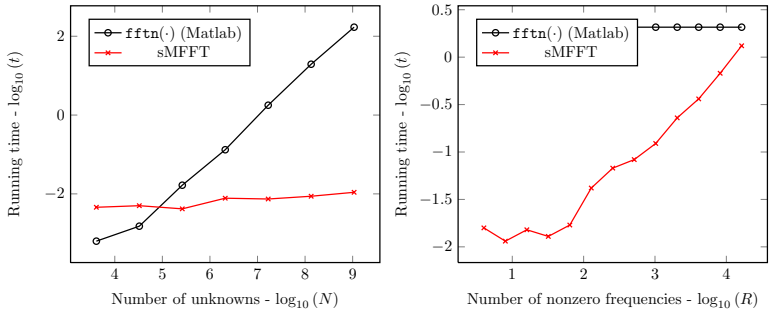
For case 2), we fixed $N =  \mathcal{O}(10^8)$ and proceeded to
compare the sMFFT algorithm with the \texttt{MATLAB}
$\mathtt{fftn}(\cdot)$ function as before (w/ parameters found in
Table \ref{results:parameters}). The results are shown in Figure 
\ref{results:N_dependence} (right). In this case, the theory states
that the 
sMFFT algorithm should scale quasi-linearly with the number of nonzero
frequencies $R$. A close look shows that it is indeed the case. For
this case, the largest relative $\ell^2$-error observed was $1.1\cdot
10^{-2}$, again on the order of the noise level and in agreement the
theory. Finally, the cost $\mathtt{fftn}(\cdot)$ function remains 
constant as the FFT scales like $\mathcal{O}(N\log(N))$ and is oblivious to
$R$.


%
%
%
%

 \section{Conclusion}
 
 We have introduced a sparse multidimensional FFT (sMFFT) for computing
 the DFT of a $N\times 1$ sparse, real-positive vector (having $R$
 nonzeros) that is stable to low-level noise, that exhibits a sampling complexity of ($\mathcal{O}(R \log(R) \log(N))$) and a
computational complexity ($\mathcal{O}(R \log^2(R)\log(N))$). The method also has a scaling of $\mathcal{O}(\log(p))$ with respect to probability of failure, $\mathcal{O}(\log(\eta))$ with respect to accuracy and $\mathcal{O}(1)$ with respect to dimension. We have provided a rigorous
 theoretical analysis of our approach demonstrating each
 claim. Finally, we have implemented our algorithm and provided
 numerical examples in 3D successfully demonstrating the claimed scaling. 
 
%

\appendix
\newpage
\label{appendix:genfunc}

\newpage
\section{Proofs}
\label{appendix:proofs}
In this appendix, we present all proofs and accompanying results
related to the statements presented in the main body of the work. 

\subsection{Proofs of Section \ref{1DsFFT}}
\label{appendix:support}

\begin{lemma}
\label{1DsFFT:findsupport:isomorphism}
Let $0 < Q \leq N \in \mathbb{N}$, $Q\perp N$. Then the map,
\begin{equation*}
 n \in \{0, 1, ..., N-1 \} \rightarrow n Q \, \mathrm{mod} \, N \subset \{0, 1, ..., N-1 \}
\end{equation*}
is an isomorphism.
\end{lemma}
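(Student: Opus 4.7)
The plan is to prove this via the standard trick: on a finite set, an injective self-map is automatically bijective, and injectivity follows immediately from the coprimality hypothesis.

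First I would identify the set $\{0,1,\dots,N-1\}$ with $\mathbb{Z}/N\mathbb{Z}$ so that the map $\varphi_Q(n) = nQ \bmod N$ can be analyzed as multiplication by $Q$ in this ring. The map is clearly well-defined as a function from $\mathbb{Z}/N\mathbb{Z}$ to itself, and it is additive: $\varphi_Q(n+m) = (n+m)Q \bmod N = (nQ + mQ) \bmod N = \varphi_Q(n) + \varphi_Q(m)$ in $\mathbb{Z}/N\mathbb{Z}$, so it is a group endomorphism of the additive group.

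Next I would establish injectivity. Suppose $\varphi_Q(n) = \varphi_Q(m)$, i.e., $nQ \equiv mQ \pmod N$, which yields $N \mid (n-m)Q$. Because $Q \perp N$, a standard divisibility argument (or Bezout's identity, producing $a,b$ with $aQ + bN = 1$) forces $N \mid (n-m)$. Since both $n$ and $m$ lie in $\{0,1,\dots,N-1\}$, we have $|n-m| < N$, hence $n = m$. Because the domain and codomain are finite sets of the same cardinality, injectivity implies surjectivity, so $\varphi_Q$ is a bijection.

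To finish, I would note that Bezout also produces an explicit two-sided inverse: letting $[Q]_N^{-1}$ denote the unique element of $\{0,\dots,N-1\}$ with $Q[Q]_N^{-1} \equiv 1 \pmod N$, the map $n \mapsto n[Q]_N^{-1} \bmod N$ inverts $\varphi_Q$ on both sides, which gives the inverse explicitly (and is precisely the form used in Eq.~\eqref{support:permutation} and throughout Section~\ref{subsec:rapid_recovery}). Combined with the additivity noted above, this establishes $\varphi_Q$ as a group isomorphism of $\mathbb{Z}/N\mathbb{Z}$.

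I do not anticipate any real obstacle here: the result is elementary number theory. The only thing worth being careful about is matching the paper's use of the word ``isomorphism'' — since the lemma is invoked to justify relabelings of frequency indices via permutation operators, the content that is actually needed downstream is just the bijection, and the additivity comes essentially for free, so both interpretations are handled by the argument above.
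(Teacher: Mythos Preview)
Your proposal is correct and follows essentially the same argument as the paper: both reduce to showing injectivity (using that $N\mid (n-m)Q$ together with $Q\perp N$ forces $N\mid(n-m)$, and then $|n-m|<N$ gives $n=m$), and then invoke finiteness of the domain to conclude bijectivity. Your additional remarks on additivity and the explicit inverse $[Q]_N^{-1}$ are harmless elaborations but are not needed for the lemma as stated or used.
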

\begin{proof}
Since the range is discrete and a subset of the domain, it suffices to show that the map is injective. Surjectivity will then follow from the pigeon hole principle. To show injectivity, consider $i,j \in  \{0, 1, ..., N-1 \}$, and assume,
\begin{equation*}
 i Q \, \mathrm{mod} \, N =  j Q \, \mathrm{mod} \, N
\end{equation*}
This implies (by definition) that there exists some integer $p$ such that,
\begin{equation*}
(i-j) Q = pN
\end{equation*}
so that $N$ divides $(i-j) Q$. However, $N \perp Q$ so $N$ must be a factor $(i-j)$. Now, $i,j$ are restricted to $ \{0, 1, ..., N-1 \}$ so,
\begin{equation*}
|i-j | < N,
\end{equation*}
and the only integer divisible by $N$ that satisfies this equation is $0$. Thus,
\begin{equation*}
i-j = 0 \Leftrightarrow i = j
\end{equation*}
which demonstrates injectivity.
\end{proof}

\begin{restatable}{lemma}{hashlemma}
\label{1DsFFT:findsupport:hash}
Let $ 0 < Q < M$ be an integer coprime to $M$ and,
\begin{equation*}
f_{n} =  \sum_{l=0}^{M-1} e^{-2 \pi i \, \frac{n \,  l}{M} }\,  \hat{f}_l
\end{equation*}
Then,
\begin{equation*}
\frac{1}{M} \, \sum_{n=0}^{M-1} e^{2 \pi i \, \frac{m \,  n}{M} }\, f_{ (n Q ) \, \mathrm{mod} \, M} =  \hat{f}_{(m [Q]^{-1}_M) \, \mathrm{mod} \, M }
\end{equation*}
where $0 < [Q]^{-1}_M  < M$ is the unique integer such that $[Q]^{-1}_M \, Q \, \mathrm{mod} \, M = 1 \, \mathrm{mod} M $. 
\end{restatable}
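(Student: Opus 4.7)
The plan is to expand the left-hand side by substituting the definition of $f_n$, interchange the order of summation, and invoke the orthogonality of the discrete characters modulo $M$. The only subtlety is ensuring that the coprimality hypothesis plays the right role, which it does precisely in guaranteeing that $[Q]^{-1}_M$ exists (as the multiplicative inverse of $Q$ in $\mathbb{Z}/M\mathbb{Z}$).

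First I would write
\begin{equation*}
\frac{1}{M}\sum_{n=0}^{M-1} e^{2\pi i\, \frac{m n}{M}}\, f_{(nQ)\,\mathrm{mod}\,M}
= \frac{1}{M}\sum_{n=0}^{M-1} e^{2\pi i\, \frac{m n}{M}} \sum_{l=0}^{M-1} e^{-2\pi i\, \frac{((nQ)\,\mathrm{mod}\,M)\, l}{M}} \hat{f}_l .
\end{equation*}
The key observation, which I would state explicitly, is that $e^{-2\pi i\, \frac{((nQ)\,\mathrm{mod}\,M)\, l}{M}} = e^{-2\pi i\, \frac{nQl}{M}}$, because the complex exponential is $M$-periodic in its numerator. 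This eliminates the modular reduction inside the exponential and lets one swap the order of summation without worrying about the permutation.

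Next I would interchange the two sums to obtain
\begin{equation*}
\frac{1}{M}\sum_{l=0}^{M-1} \hat{f}_l \sum_{n=0}^{M-1} e^{2\pi i\, \frac{n(m - Q l)}{M}},
\end{equation*}
and apply the standard orthogonality relation $\frac{1}{M}\sum_{n=0}^{M-1} e^{2\pi i\, n k / M} = \mathbb{1}[k \equiv 0 \pmod M]$. This collapses the inner sum to $M$ precisely when $Q l \equiv m \pmod M$ and to $0$ otherwise. Because $Q \perp M$, the congruence $Q l \equiv m \pmod M$ has the unique solution $l \equiv (m\, [Q]^{-1}_M)\,\mathrm{mod}\,M$ in $\{0,1,\dots,M-1\}$, yielding exactly $\hat{f}_{(m[Q]^{-1}_M)\,\mathrm{mod}\,M}$.

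There is no real obstacle here; the proof is essentially a one-line computation. The only place coprimality enters is in asserting existence and uniqueness of $[Q]^{-1}_M$, and one might optionally cite Lemma \ref{1DsFFT:findsupport:isomorphism} to emphasize that $n \mapsto (nQ)\,\mathrm{mod}\,M$ is a bijection on $\{0,\dots,M-1\}$ (though the argument above does not actually need the change-of-variables form of that fact, only the existence of the inverse). I would keep the write-up to a few lines, consisting of the substitution, the periodicity simplification, the swap of sums, and the orthogonality step.
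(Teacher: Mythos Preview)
Your proposal is correct and follows essentially the same approach as the paper: substitute the definition of $f_{(nQ)\,\mathrm{mod}\,M}$, interchange sums, and apply discrete orthogonality. The paper inserts an extra change of variables $n\mapsto nQ\,\mathrm{mod}\,M$ (citing Lemma~\ref{1DsFFT:findsupport:isomorphism}) before invoking orthogonality, whereas you apply orthogonality directly in $n$ and use coprimality only to solve $Ql\equiv m\pmod M$; your route is marginally cleaner but not a different idea.
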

\begin{proof}
Consider 
\begin{align*}
\frac{1}{M} \, \sum_{n=0}^{M-1} e^{2 \pi i \, \frac{m \,  n}{M} }\, f_{(nQ) \, \mathrm{mod} M } &= \sum_{l=0}^{M-1}  \left (  \frac{1}{M}  \sum_{n=0}^{M-1} e^{2 \pi i \, \frac{n}{M} \left( m  - Q  l  \, \mathrm{mod} \, M \right ) }\right ) \,  \hat{f}_l \\
&=   \sum_{l=0}^{M-1}  \left (   \frac{1}{M}  \sum_{n=0}^{M-1} e^{2 \pi i \, \frac{n Q}{M} \left( m [Q]^{-1}_M   \, \mathrm{mod} \, M -   l \right ) }\right ) \,  \hat{f}_l
\end{align*}
However,
\begin{equation*}
 \frac{1}{M}  \sum_{n=0}^{M-1} e^{2 \pi i \, \frac{n Q}{M} \left( m[Q]^{-1}_M   \,
     \mathrm{mod} \, M -   l \right ) }  = \frac{1}{M}  \sum_{j=0}^{M-1} e^{2 \pi i \, \frac{j}{M} \left( m[Q]^{-1}_M   \,
     \mathrm{mod} \, M -   l \right ) } =  \delta_{ m[Q]^{-1}_M
   \, \mathrm{mod} \, M  , l}, 
\end{equation*}
where the second equality follows from the fact that $m \rightarrow m [Q]^{-1}_M \, \mathrm{mod} \, M $ is an isomorphism (Lemma \ref{1DsFFT:findsupport:isomorphism}). This implies that
\begin{equation*}
\frac{1}{M} \sum_{n=0}^{M-1} e^{2 \pi i \, \frac{m \,  n}{M} }\, f_{(nQ) \, \mathrm{mod} \, M} =  \hat{f}_{ (m[Q]^{-1}_M )  \, \mathrm{mod} \, M }
\end{equation*}
as claimed.
\end{proof}


\begin{restatable}{lemma}{shufflelemma}
\label{1DsFFT:findsupport:shuffle}
Let $M \in \mathbb{N} / \{0\}$ and
let $Q$ be a uniform random variable over $\mathcal{Q}(M)$ (Definition \ref{1DsFFT:Qset}). Then, 
\begin{equation*}
\mathbb{P} \left ( \left | j Q \, \mathrm{mod} M \right | \leq C  \right )  \leq \mathcal{O} \left( \frac{C}{M} \right )
\end{equation*}
for all $0 < j < M$ (up to a $\log(\log(M))$ factor).
\end{restatable}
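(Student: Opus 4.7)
The plan is to reduce to the coprime case and then exploit a uniformity argument for $Q \bmod k$. First, set $d = \gcd(j,M)$ and write $j = d j'$, $M = d k$ with $\gcd(j',k)=1$ and $k \geq 2$ (since $0<j<M$). A direct computation gives the key identity
\[
  jQ \bmod M \;=\; d\bigl(j' Q \bmod k\bigr),
\]
because $j'Q = qk + r$ with $0\le r<k$ implies $jQ = qM + dr$. Interpreting $|\cdot|$ as circular distance on $\mathbb{Z}/M$, the event $|jQ \bmod M| \le C$ becomes $|j'Q \bmod k|_k \le C/d$, where $|\cdot|_k$ denotes circular distance on $\mathbb{Z}/k$. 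In particular, if $C < d$ this event is impossible for $Q \in \mathcal{Q}(M)$, giving probability zero; so assume $C \ge d$.

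Next, I would establish that when $Q$ is uniform on $\mathcal{Q}(M)=(\mathbb{Z}/M)^\times$, the residue $Q \bmod k$ is uniform on $\mathcal{Q}(k)=(\mathbb{Z}/k)^\times$. This follows from the standard fact that the reduction homomorphism $(\mathbb{Z}/M)^\times \to (\mathbb{Z}/k)^\times$ is surjective with every fiber of size $\phi(M)/\phi(k)$ (proved by Chinese Remainder on the prime-power decomposition of $M$, with the primes dividing $M/k$ but not $k$ contributing a free factor). Since $\gcd(j',k)=1$, left-multiplication by $j'$ is a bijection of $\mathcal{Q}(k)$, so $j'Q \bmod k$ is itself uniform on $\mathcal{Q}(k)$.

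The probability in question is therefore bounded by
\[
  \frac{\#\bigl\{u \in \mathcal{Q}(k) : |u|_k \le C/d\bigr\}}{\phi(k)}
  \;\le\; \frac{2\lfloor C/d\rfloor + 2}{\phi(k)},
\]
since there are at most $2\lfloor C/d\rfloor + 2$ integers in $[0,k)$ within circular distance $C/d$ of the origin, and the units form a subset. Invoking the Rosser--Schoenfeld / Mertens lower bound $\phi(k) = \Omega\bigl(k/\log\log k\bigr)$ and using $C \ge d$ to absorb the additive "$+2$" into the leading term, this is $O\!\left(\frac{C\log\log k}{dk}\right) = O\!\left(\frac{C \log\log M}{M}\right)$, matching the claim up to the stated $\log\log M$ factor.

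The main obstacle is not any single step but keeping the reduction clean when $d>1$: without the passage through $k = M/d$, naively bounding via $\gcd(j,M)$ solutions per residue loses a factor of $d$ that cannot be recovered. The uniformity of $Q \bmod k$ is what makes the argument tight, and the $\log\log M$ factor is intrinsic and unavoidable since $\phi(M)/M$ can be as small as $\Theta(1/\log\log M)$.
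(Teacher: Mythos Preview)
Your argument is correct and takes a genuinely different route from the paper's. The paper fixes a target residue $k$, counts solutions to the linear Diophantine equation $jq + iM = k$ over all $q\in[0,M)$ (getting at most $\gamma=\gcd(j,M)$ solutions, and none unless $\gamma\mid k$), bounds each $\mathbb{P}(Q=q)$ by $1/\phi(M)\le O(\log\log M)/M$, and then sums over $|k|\le C$ with $\gamma\mid k$; the factor $\gamma$ cancels against the $C/\gamma$ admissible values of $k$. Your approach instead factors out $d=\gcd(j,M)$ at the outset, reducing to a coprime multiplier on $\mathbb{Z}/k$ with $k=M/d$, and then invokes the structural fact that the reduction $(\mathbb{Z}/M)^\times\to(\mathbb{Z}/k)^\times$ is a surjective group homomorphism with equal-size fibers, so $j'Q\bmod k$ is exactly uniform on the units of $\mathbb{Z}/k$. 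The paper's count is more elementary (no group theory beyond solving $jq\equiv k$), while your reduction is cleaner conceptually: it makes the cancellation of the $\gcd$ automatic and shows directly why the answer depends only on $C/M$. Both routes ultimately rely on the same Mertens-type lower bound $\phi(n)=\Omega(n/\log\log n)$ to handle the totient, and both incur the same unavoidable $\log\log M$ loss.
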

\begin{proof}

Fix $0 < j, k <M$ and let $\gamma = \mathrm{gcd} (j,M)$. Consider,
\begin{align*}
 \mathbb{P} \left ( j Q \, \mathrm{mod} \, M  = k \right ) = \sum_{q\in \mathcal{Q}(M)}  \mathbb{P} \left ( j q \, \mathrm{mod} \, M  = k  | Q = q \right )  \, \mathbb{P}( Q = q ) =   \sum_{q \in \mathcal{Q}(M) } \mathbb{I}_{ j q \, \mathrm{mod} M   = k}(q) \,  \mathbb{P}( Q = q )
\end{align*}
and note that,
\begin{align*}
 \mathbb{P}( Q = q ) = \frac{1}{\# \mathcal{Q}(M) } = \frac{1}{\phi(M)} &\leq \frac{e^\zeta \log(\log(M)) + \frac{3}{\log(\log(M))} }{M} 
\end{align*}
following bounds on the Euler totient function $\phi(\cdot)$ (\cite{ribenboim2012book}), where $\zeta$ is the Euler-Mascheroni constant, and since $Q$ is uniformly distributed in
$\mathcal{Q}(M)$. Therefore,
\begin{align*}
 \mathbb{P} \left ( j Q \, \mathrm{mod} \, M  = k \right ) \leq  \frac{e^\zeta \log(\log(M)) + \frac{3}{\log(\log(M))} }{M}  \,   \sum_q \mathbb{I}_{ j q \, \mathrm{mod} M  =  k}(q) 
\end{align*}
We now show that the quantity $\sum_q \mathbb{I}_{ jq \, \mathrm{mod} M   = k}(q)$ is bounded above and below by,
\begin{equation*}
\gamma - 1 \leq  \sum_q \mathbb{I}_{ jq \, \mathrm{mod} M = k}(q) \leq \gamma
\end{equation*}
To see this, first note that this quantity corresponds to the number of integers $q$ which hash to the integer $k$ through the map $q \rightarrow (j q) \,
\mathrm{mod}\,M$. Now, assume there exists some $q$ such that 
\begin{equation}
\label{1DsFFT:findsupport:modulo}
 j q \, \mathrm{mod} M \equiv k, 
\end{equation}
which implies that
\begin{equation}
\label{1DsFFT:findsupport:diophantine}
j q + i M = k
\end{equation}
for some integer $i \in \mathbb{Z}$. This is a Diophantine equation
which has infinitely many solutions if and only if $\mathrm{gcd}\left (j,M\right ) = \gamma$ divides $k$ (\cite{mordell1969diophantine}). Otherwise, it has no
solution. Assuming it does and $(q_0,i_0)$ is a particular solution, all remaining solutions must take the form 
\begin{equation*}
q = q_0 + u \frac{M}{\gamma}, \; i_0 - u \frac{j}{\gamma}
\end{equation*}
where $u \in \mathbb{Z}$. However, since $0 \leq q < M$ the number of possible solutions must be such that,
\begin{equation*}
\gamma -1 \leq \# \, \left\{ q \in [0,M): q = q_0 + u \frac{M}{\gamma}    \right \} \leq \gamma.
\end{equation*}
which proves the claim. Thus,  
\begin{align*}
\mathbb{P} \left ( jQ \, \mathrm{mod} \, M   = k \right ) &\leq \left ( e^\zeta \log(\log(M)) + \frac{3}{\log(\log(M))} \right )  \,  \frac{ \gamma }{M}
\end{align*}
We can now treat $\mathbb{P} \left ( \left | jQ \, \mathrm{mod} \, M  \right | \leq C \right )$. Before we proceed however, recall that Eq.\eqref{1DsFFT:findsupport:modulo} has a solution if and only if $\gamma | k$. We then write, 
\begin{align*}
\mathbb{P} \left ( \left | jQ \, \mathrm{mod} \, M \right | \leq C   \right ) &= \sum_{ 0 \leq k \leq C } \,\mathbb{I}_{\gamma | k } (k) \, \mathbb{P} \left ( nQ \, \mathrm{mod} \, M  = k \right ) 
\end{align*}
from which it follows that,
\begin{align*}
\mathbb{P} \left ( \left | jQ \, \mathrm{mod} \, M \right | \leq C   \right ) &\leq\left ( e^\zeta \log(\log(M)) + \frac{3}{\log(\log(M))} \right )  \,  \frac{ \gamma }{M} \,  \sum_{ 0 \leq k \leq C } \,\mathbb{I}_{\gamma  | k } (k)  \\
&\leq\left ( e^\zeta \log(\log(M)) + \frac{3}{\log(\log(M))} \right ) \,  \frac{\gamma }{M} \,  \frac{C}{\gamma }  \\
&\leq\left ( e^\zeta \log(\log(M)) + \frac{3}{\log(\log(M))} \right ) \, \frac{C}{M} 
\end{align*}
since the number of integers in $0 \leq k \leq C$ that are divisible by $\gamma$ is bounded above  by $\frac{C}{\gamma }$. Finally, since this holds regardless of our choice of $j$, this proves the desired result.
%
\end{proof}

\begin{lemma}
\label{1DsFFT:findsupport:filter_lemma}
Consider a function $f(x)$ of the form of Eq.\eqref{1DsFFT:orig_func}
and satisfying the constraint Eq.\eqref{1DsFFT:positivity}. Let $0 <
\sigma = \mathcal{O} \left(  \frac{M}{R \sqrt{\log(R)} } \right )  $, $0 < \delta < 1$ and, 
\begin{align*}
\mu &= \min_{j \in \mathcal{S}} |\hat{f}_j| \\
 \Delta &= \frac{\max_{j \in \mathcal{S}} |\hat{f}_j|}{\min_{j \in \mathcal{S}} |\hat{f}_j|}  = \frac{||\hat{f}||_\infty}{ \mu} 
\end{align*}
Finally, let $F_{\mathcal{A}(K;M)} (\cdot)$ and $\Psi_\sigma(\cdot)$ be the operators found in Eq.\eqref{RASR:maineq}. Then, there exists a constant $1 < C < \infty$ such that if,
\begin{equation}
\label{ksize}
K \geq C \, \frac{M \sqrt{\log\left ( \frac{2 \Delta}{\delta} \right ) }  }{\pi \, \sigma} 
\end{equation}
the inequality,
\begin{equation*}
\left [ F_{\mathcal{A}(K;M)} \left (  \Psi_\sigma \left(    f_{k;\,M} \right ) \right )  \right ]_{ n  \frac{M}{K} }   \geq  \delta \, \mu
\end{equation*}
implies that
\begin{equation}
\label{distsize}
\inf_{j \in \mathcal{S}}\left  |  n  \frac{M}{K} - j \right | \leq  \sigma \, \sqrt{  \log\left ( \frac{2 R \Delta}{\delta } \right )}
\end{equation}
and,
\begin{equation*}
\inf_{j \in \mathcal{S}} \left | n \frac{M}{K}  - j \right | \leq \sigma \, \sqrt{  \log\left ( \frac{ 1 }{\delta } \right )},
\end{equation*}
implies that
\begin{equation*}
\left [ F_{\mathcal{A}(K;M)} \left (  \Psi_\sigma \left(    f_{k;\,M}\right ) \right )  \right ]_{n \frac{M}{K} }  \geq  \delta \, \mu
\end{equation*}
for all $n \in \{ 0,1,...,K-1\}$.
\end{lemma}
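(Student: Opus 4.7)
The plan is to exploit Eq.~\eqref{support:smoothFT} (dropping the shuffling $\Pi_Q$ since it plays no role here), which expresses the frequency-space image of $\Psi_\sigma(f_{k;M})$ as the exact value of a sum of Gaussians
\begin{equation*}
\Phi(\xi) := \sum_{j \in \mathcal{S}} \hat{f}_j \, e^{-\pi^2|\xi - j|^2/\sigma^2}
\end{equation*}
(up to the normalization built into $g_\sigma$ in Eq.~\eqref{support:filterdef}). Evaluating the truncated operator $F_{\mathcal{A}(K;M)}$ at the grid points $\xi = n M/K$ amounts to replacing the full sum $\sum_{m=0}^{M-1}$ by $\sum_{m \in \mathcal{A}(K;M)}$, and the difference is an aliasing/truncation error $\mathcal{E}_K$. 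The first step is therefore to bound $|\mathcal{E}_K|$ uniformly in $n$. Because $g_\sigma(m/M)$ decays like $e^{-\pi^2 \sigma^2 m^2/M^2}$, the tail sum over $|m|>K/2$ is controlled by a Gaussian tail, and the hypothesis \eqref{ksize} is designed precisely so that $|\mathcal{E}_K| \leq \tfrac{\delta\mu}{2}$ uniformly.

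With that approximation in hand, both implications reduce to elementary properties of the mixture $\Phi(n M/K)$. For the forward direction, argue by contrapositive: suppose every $j \in \mathcal{S}$ satisfies $|n M/K - j| > \sigma\sqrt{\log(2R\Delta/\delta)}$. Then each Gaussian weight obeys $e^{-|n M/K - j|^2/\sigma^2} < \delta/(2R\Delta)$, so using $\hat{f}_j \leq \|\hat{f}\|_\infty = \mu\Delta$ and $\#\mathcal{S} \leq R$,
\begin{equation*}
0 \leq \Phi(n M/K) < R \cdot \mu\Delta \cdot \frac{\delta}{2R\Delta} = \frac{\mu\delta}{2}.
\end{equation*}
Combined with the truncation bound this gives $[F_{\mathcal{A}}\Psi_\sigma f]_{nM/K} < \delta\mu$, contradicting the hypothesis; hence some $j \in \mathcal{S}$ must lie within $\sigma\sqrt{\log(2R\Delta/\delta)}$ of $n M/K$.

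For the backward direction, pick the minimizer $j^\star \in \mathcal{S}$ with $|nM/K - j^\star| \leq \sigma\sqrt{\log(1/\delta)}$. Here the nonnegativity assumption on $\hat{f}$ is crucial: every term in $\Phi(nM/K)$ is nonnegative, so dropping all terms except $j=j^\star$ yields
\begin{equation*}
\Phi(n M/K) \geq \hat{f}_{j^\star}\, e^{-|nM/K-j^\star|^2/\sigma^2} \geq \mu \cdot e^{-\log(1/\delta)} = \mu\delta.
\end{equation*}
(Absent positivity, cancellation among terms could destroy this lower bound; this is the single place where the positivity hypothesis enters.) After accounting for the truncation error (with a slight adjustment of constants, or by absorbing a factor of $2$ into the leading constant $C$ in \eqref{ksize}), we obtain $[F_{\mathcal{A}}\Psi_\sigma f]_{nM/K} \geq \delta\mu$ as claimed.

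The main technical obstacle is the first step: producing a clean uniform bound $|\mathcal{E}_K| \leq \delta\mu/2$ from \eqref{ksize}. This requires handling the periodized Gaussian $g_\sigma$ (its $h$-sum in \eqref{support:filterdef}) and carefully relating the discrete tail $\sum_{|m|>K/2} g_\sigma(m/M) |f_m|$ to a continuous Gaussian tail integral via $\|f\|_\infty \leq \sum_{j \in \mathcal{S}} \hat{f}_j \leq R\mu\Delta$; everything after that is essentially the two-line comparison of Gaussian weights against the threshold $\delta\mu$ outlined above.
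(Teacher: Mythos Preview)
Your proposal is correct and follows essentially the same approach as the paper: bound the truncation error $|\mathcal{E}_K|\le \tfrac{\delta\mu}{2}$ via a Gaussian tail estimate (the paper does this with the integral test and an $\mathrm{erfc}$ bound), then handle the forward implication by contradiction/contrapositive using $\hat{f}_j\le\|\hat f\|_\infty$ and $\#\mathcal{S}\le R$, and the backward implication by dropping all but the nearest term $j^\star$ and invoking positivity. The only minor discrepancy is cosmetic (your extra $\pi^2$ in the Gaussian exponent versus the paper's $e^{-(\,\cdot\,)^2/\sigma^2}$); in fact you are slightly more careful than the paper in noting that the backward bound also incurs the $\pm\tfrac{\delta\mu}{2}$ truncation error and may require absorbing a factor into the constant $C$.
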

\begin{proof}
Consider the quantity
\begin{align}
 & \frac{1}{M} \,  \sum_{m \in \mathcal{A}(K;M) } e^{2\pi i \frac{ n\frac{M}{K} \, m}{M}} \, \hat{g}_{ \sigma } \left ( \frac{m}{M} \right ) \, f_{m;M} = \sum_{j \in \mathcal{S} }  \left (  \frac{1}{M} \,  \sum_{m \in \mathcal{A}(K;M) }  e^{2\pi i \frac{m}{K} \left ( n - j  \right )  } \hat{g}_{\sigma} \left ( \frac{m}{M} \right )   \right )\, \hat{f}_j  
\label{filter}
\end{align}
and recall that, 
$$\frac{1}{M}  \hat{g}_{\sigma} \left ( \frac{m}{M} \right ) = \frac{\sqrt{\pi} \sigma}{M}  \sum_{h \in \mathbb{Z} } e^{- \pi^2 \sigma^2 \left ( \frac{m + hM}{M}  \right )^2 } $$
where $ \mathcal{A}(K;M) := \left \{ m  \in \{ 0,1,...,M-1 \} \, : \,m \leq \frac{K}{2} \; \mathrm{or} \;  \left |  m - M \right | <  \frac{K}{2} \right \}$ (Definition \ref{1DsFFT:Fourierset}). From this expression, it is apparent that there exists some constant $1 < C< \infty$ such that by choosing $K \geq C \, \frac{M \sqrt{\log\left ( \frac{2  \Delta}{\delta} \right ) }  }{\pi \, \sigma} $, one has,
\begin{equation*}
 \left |  \sum_{j \in \mathcal{S} }  \left (  \frac{1}{M} \,  \sum_{m \in \mathcal{A}^c (K;M) }  e^{2\pi i \frac{m}{K} \left ( n - j  \right )  } \hat{g}_{\sigma} \left ( \frac{m}{M} \right )   \right )\, \hat{f}_j    \right | \leq   \sum_{m \in \mathcal{A}^c (K;m) } \left (  \frac{\sqrt{\pi} \sigma}{M}  \sum_{h \in \mathbb{Z} } e^{- \pi^2 \sigma^2 \left ( \frac{m + hM}{M}  \right )^2 } \right)   \, || \hat{f} ||_\infty\leq  \frac{\delta \mu }{2} 
\end{equation*}
Indeed, by the integral test,
\begin{align*}
 \sum_{m \in \mathcal{A}^c (K;m) } \left (  \frac{\sqrt{\pi} \sigma}{M}  \sum_{h \in \mathbb{Z} } e^{- \pi^2 \sigma^2 \left ( \frac{m + hM}{M}  \right )^2 } \right)   &\leq A \, \frac{\sqrt{\pi} \sigma}{M} \, \sum_{ m \geq \frac{K}{2} }  e^{- \pi^2 \sigma^2 \left ( \frac{m}{M}  \right )^2 } \\
&\leq A \,  \frac{\sqrt{\pi} \sigma}{M} \,  e^{- \pi^2 \sigma^2 \left ( \frac{K}{2M}  \right )^2 } +   \frac{\sqrt{\pi} \sigma}{M} \, \int_{\frac{K}{2}}^\infty  e^{- x^2 \left ( \frac{\pi \sigma }{M}  \right )^2 } \, \mathrm{d} x \\
&\leq  A \,  \frac{\sqrt{\pi} \sigma}{M} \,  e^{- \pi^2 \sigma^2 \left ( \frac{K}{2M}  \right )^2 }  + \frac{1}{\sqrt{\pi}} \, \mathrm{erfc} \left( \frac{\pi K \sigma}{2M}  \right)  \\
& \leq B \, e^{- \pi^2 \sigma^2 \left ( \frac{K}{2M}  \right )^2 } 
\end{align*}
for some positive constants $A,B$, and where the last inequality follows from estimates on the complementary error function \cite{chang2011chernoff} and the fact that $\frac{\sqrt{\pi} \sigma}{M} = \mathcal{O} \left (\frac{1}{R \sqrt{\log(R)} }  \right ) $ by assumption. Therefore,
\begin{align*}
\left [ F_{\mathcal{A}(K;M)} \left (  \Psi_\sigma \left(    f_{m;\,M} \right ) \right )  \right ]_{n\frac{M}{K}}  &=  \frac{1}{M} \, \sum_{m \in {\mathcal{A}(K;M)} } e^{2\pi i \frac{ n  \,  m}{K}} \, \hat{g}_{ \sigma } \left ( \frac{m}{M} \right ) \, f_{m;M}  \\
&= \frac{1}{M} \,  \sum_{m=0}^{M-1} e^{2\pi i \frac{n \frac{M}{K}  m}{M}} \, \hat{g}_{ \sigma } \left ( \frac{m}{M} \right ) \, f_{m;M}   + \epsilon_n  \\
&= \sum_{j \in \mathcal{S} }  e^{- \frac{\left ( n \frac{M}{K} -j \right ) }{\sigma^2} } \, \hat{f}_j   + \epsilon_n  \\
\end{align*}
where $ \max_n |\epsilon_n|  \leq   \frac{\delta \, \mu }{2}$. Now assume: $|  \left [ F_{\mathcal{A}(K;M)} \left (  \Psi_{ \sigma } \left(    f_{k;\,M}  \right ) \right )  \right ]_{n \frac{M}{K} } |  \geq \delta \, \mu $. Then, the triangle inequality and the previous equation imply that,
\begin{equation}
\label{1DsFFT:findsupport:gaussianfinaleq}
 \sum_{j \in \mathcal{S} }  e^{- \frac{\left (   n  \frac{M}{K}  -j \right ) }{\sigma^2} } \, \hat{f}_j    \geq \left |   \left [ F_{\mathcal{A}(K;M)} \left (  \Psi_{ \sigma } \left(    f_{m;\,M}  \right ) \right )  \right ]_{n \frac{M}{K} }\right |  - \frac{ \delta \, \mu }{2}  \geq  \left ( \delta - \frac{\delta}{2} \right ) \,  \mu= \frac{\delta }{2} \,\mu .
\end{equation}
We claim that this cannot occur unless,
\begin{equation}
\label{sol}
\inf_{j \in \mathcal{S}} \left | n \frac{M}{K}  - j \right | \leq  \sigma \, \sqrt{  \log\left ( \frac{2  R \Delta }{\delta } \right )}.
\end{equation}
We proceed by contradiction. Assume the opposite holds. Then,
\begin{align*}
 \sum_{j \in \mathcal{S} }  e^{- \frac{\left ( n  \frac{M}{K}  -j \right ) }{\sigma^2} } \, \hat{f}_j    \leq ||\hat{f}||_\infty \, \sum_{j \in \mathcal{S}}  e^{- \frac{ \left ( n  \frac{M}{K}  - j \right )^2}{ \sigma^2} }  < ||\hat{f}||_\infty \, \frac{\delta}{2 \Delta  } = \frac{\delta \, \mu }{2} \, 
\end{align*}
by assumption. This is a contradiction. Thus, Eq.\eqref{sol} must indeed hold. This proves the first part of the proposition. For the second part, assume
\begin{equation}
\inf_{j \in \mathcal{S}} \left |  n \frac{ M}{K} - j \right | \leq \sigma \, \sqrt{  \log\left ( \frac{1}{ \delta } \right )}
\end{equation}
holds. Letting $j^*$ be such that $\left | n  \frac{M}{K} -j^*    \right | =  \inf_{j \in \mathcal{S}} \left | n  \frac{M}{K} -j    \right |$, we note that,
\begin{equation*}
 \sum_{j \in \mathcal{S} }  e^{- \frac{\left (  n  \frac{M}{K}  -j \right )^2 }{\sigma^2} } \, \hat{f}_j  \geq  e^{-\frac{\left ( n \frac{M}{K} -j^*   \right )^2}{ \sigma^2} }  \hat{f}_{j^*} \geq \delta \, \mu
\end{equation*}
since $\hat{f}$ and the Gaussian are all positive by assumption. This shows the second part.
\end{proof}


We are now ready to prove the validity of the
Algorithm~\ref{1DsFFT:findsupport}. 
\begin{restatable}{prop}{algsuppcorrectness} {(Correctness of Algorithm \ref{RASR:recoveraliasedsupp})}
\label{1DsFFT:findsupport:findsupport_proof}
Consider a function $f(x)$ of the form of Eq.\eqref{1DsFFT:orig_func}
and satisfying the nonnegativity hypothesis, and let
$\Pi_Q(\cdot)$, $\Psi_\sigma(\cdot)$ and $F_K (\cdot)$ be the
operators found in Eq.\eqref{RASR:maineq}
where $\delta$, $\mu$, $\Delta$ and $K$ are as in Lemma \ref{1DsFFT:findsupport:filter_lemma} and $K$ satisfies the additional constraint $K> \frac{R}{\alpha} \,\sqrt{\frac{\log\left ( \frac{2R\Delta}{\delta} \right ) }{\log\left ( \frac{1}{\delta} \right ) }  } $, and
\begin{align}
\label{sigmadef}
\sigma = \frac{\alpha \, \frac{M}{2R} }{\sqrt{  \log\left ( \frac{2 R \Delta}{\delta} \right )}} 
\end{align}
for some $0<\alpha < 1$. Assume further that the integers $\{ Q^{(l)} \}_{l=1}^L$ are chosen independently and uniformly at random within $\mathcal{Q}(M)$, for some $1\leq L  \in \mathbb{N}$.
Consider 
\begin{equation}
\phi_{ \left [ i \frac{K}{M} \right ] \frac{M}{K} }  (Q^{(l)}) := \left [ F_{\mathcal{A}(K;M)} \left (  \Psi_\sigma \left(    \Pi_{Q^{(l)}}  \left (  f_{  k;M} \right )   \right ) \right )  \right ]_{ \left [ i \frac{K}{M} \right ] \frac{M}{K} } 
\end{equation}
Then,
\begin{equation}
\label{support:mixing}
\mathbb{P} \left ( \cap_{l=1}^L   \left \{  |\phi_{  \left [ i   \frac{K}{M} \right ] \,  \frac{M}{K} } ( Q^{(l)} | \geq \delta \,  \mu \right \} \right ) \leq \alpha^L
\end{equation}
for every $i $ such that $ (i [Q]_M^{-1} ) \, \mathrm{mod} \, M \not \in \mathcal{S}^c$, and
\begin{equation*}
 |\phi_{ \left [ i \,  \frac{K}{M} \right ] \frac{M}{K} } ( Q^{(l)}) | \geq  \delta \, \mu
\end{equation*}
almost surely for all $Q^{(l)}$ and every $i$ such that $ (i [Q]_M^{-1} ) \, \mathrm{mod} \, M  \in \mathcal{S}$.
\end{restatable}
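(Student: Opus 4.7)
The claim splits into two parts: an almost-sure lower bound on $|\phi|$ when $(i[Q]_M^{-1}) \bmod M \in \mathcal{S}$, and a probabilistic upper bound of $\alpha^L$ on the joint event across $L$ independent trials when $(i[Q]_M^{-1}) \bmod M \in \mathcal{S}^c$. My plan is to reduce both to Lemma \ref{1DsFFT:findsupport:filter_lemma} and Lemma \ref{1DsFFT:findsupport:shuffle}, after translating the sample-space shuffling $\Pi_{Q^{(l)}}$ into a frequency-space shuffling via Lemma \ref{1DsFFT:findsupport:hash}. Concretely, $\phi_{n M/K}(Q^{(l)})$ equals the filtered, band-limited DFT of a nonnegative vector whose support is the shuffled support $\mathcal{S}^{\mathrm{sh}} := \{(jQ^{(l)}) \bmod M : j \in \mathcal{S}\}$, with the same coefficient values as the unshuffled aliased vector.

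For the easy (in-support) direction, set $i' := (i[Q^{(l)}]_M^{-1}) \bmod M \in \mathcal{S}$, so that $i = (i'Q^{(l)}) \bmod M \in \mathcal{S}^{\mathrm{sh}}$. The evaluation point $[iK/M] \, M/K$ lies within $M/(2K)$ of $i$, hence within $M/(2K)$ of an element of $\mathcal{S}^{\mathrm{sh}}$. To invoke the second implication of Lemma \ref{1DsFFT:findsupport:filter_lemma}, which requires this distance to be at most $\sigma \sqrt{\log(1/\delta)}$, substitute $\sigma = \alpha M /(2R\sqrt{\log(2R\Delta/\delta)})$; the resulting sufficient condition is $K \geq (R/\alpha)\sqrt{\log(2R\Delta/\delta)/\log(1/\delta)}$, which is exactly the additional lower bound on $K$ assumed in the hypothesis. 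The lemma then yields $|\phi| \geq \delta \mu$ deterministically.

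For the hard (out-of-support) direction, assume $i' \in \mathcal{S}^c$ and suppose $|\phi_{nM/K}(Q^{(l)})| \geq \delta \mu$ at $nM/K = [iK/M] \, M/K$. The first implication of Lemma \ref{1DsFFT:findsupport:filter_lemma}, applied to the shuffled support $\mathcal{S}^{\mathrm{sh}}$, produces some $j' \in \mathcal{S}$ with $|nM/K - (j'Q^{(l)}) \bmod M| \leq \sigma \sqrt{\log(2R\Delta/\delta)}$. Combining with $|nM/K - i| \leq M/(2K)$ via the triangle inequality gives
\begin{equation*}
|(i'-j') Q^{(l)} \bmod M| = |i - (j'Q^{(l)}) \bmod M| \leq \frac{M}{2K} + \sigma \sqrt{\log(2R\Delta/\delta)}.
\end{equation*}
Since $i' \in \mathcal{S}^c$ and $j' \in \mathcal{S}$, the difference $i' - j'$ is nonzero modulo $M$, so Lemma \ref{1DsFFT:findsupport:shuffle} applies; a union bound over the at most $R$ values of $j' \in \mathcal{S}$ bounds the per-trial probability by $\mathcal{O}\!\left( R (M/(2K) + \sigma \sqrt{\log(2R\Delta/\delta)})/M \right)$. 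Plugging in $\sigma$ makes the Gaussian-tail contribution exactly $\alpha/2$, and the additional hypothesis on $K$ makes the discretization contribution at most $\alpha/2$, for a per-trial bound $\leq \alpha$. Independence of the $Q^{(l)}$ then gives the joint bound $\alpha^L$.

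The main obstacle will be keeping the constants consistent: the additional constraint on $K$ in the hypothesis is used twice — once to push the grid-discretization term $M/(2K)$ below $\sigma \sqrt{\log(1/\delta)}$ in the easy direction, and once to control the grid-discretization contribution to the union bound in the hard direction. Verifying that both applications go through under the same bound is the non-routine part; the remainder is bookkeeping, including the $\log\log(M)$ factor arising from Lemma \ref{1DsFFT:findsupport:shuffle}, which is absorbed into the $\tilde{\mathcal{O}}$ notation in the cost estimates.
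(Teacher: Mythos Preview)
Your proposal is correct and follows essentially the same argument as the paper's own proof: reduce to the filter lemma (Lemma~\ref{1DsFFT:findsupport:filter_lemma}) applied to the shuffled support via Lemma~\ref{1DsFFT:findsupport:hash}, use the triangle inequality with the grid error $M/(2K)$, and then apply the shuffle lemma (Lemma~\ref{1DsFFT:findsupport:shuffle}) together with a union bound over $\mathcal{S}$ and independence over~$l$. Your constant tracking (splitting the per-trial bound into a $\sigma$-term and a $M/(2K)$-term, each $\leq \alpha/2$) is in fact slightly more explicit than the paper's, which simply writes the sum as $\mathcal{O}(\alpha)$ ``by assumption.''
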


\begin{proof}
From independence, the probability in Eq. \eqref{support:mixing} is equal to, 
\begin{equation*}
\prod_{l=1}^L \, \mathbb{P} \left ( \left  |
\phi_{ \left [ (i Q \, \mathrm{mod} \, M ) \,  \frac{K}{M} \right ] \, \frac{M}{K}}( Q ) \right  | \geq \delta \,  \mu   \right). 
\end{equation*}
So it is sufficient to consider a fixed $l$. Assume first that $i[Q]_M^{-1} \, \mathrm{mod} \, M \not \in \mathcal{S}$. As a consequence of Lemma \ref{1DsFFT:findsupport:filter_lemma} and Lemma \ref{1DsFFT:findsupport:hash} we have the inclusion,
\begin{align*}
\left \{  | \phi_{  \left [ i \,  \frac{K}{M} \right ] \,  \frac{M}{K} } ( Q)  | \geq \delta \,  \mu \right \}  
 &\subset \left \{ \inf_{j \in \mathcal{S}}\left  |  \left [ i  \, \frac{K}{M} \right ] \frac{M}{K} - (j Q ) \, \mathrm{mod} \, M \right | \leq    \sigma \, \sqrt{  \log\left ( \frac{2 R \Delta}{\delta} \right )} \right \} \\ 
  &\subset  \cup_{j \in \mathcal{S}  }  \left \{ \left  | (( ( i[Q]_M^{-1} \, \mathrm{mod} \, M ) -j) Q) \, \mathrm{mod} \, M  \right | \leq    \sigma \, \sqrt{  \log\left ( \frac{2 R  \Delta  }{\delta} \right )} + \frac{M}{2K} \right \}, 
\end{align*}
which implies that the probability for each fixed $l$ is bounded by,
\begin{align*}
\mathbb{P} \left (  | \phi_{ \left [ i   \frac{K}{M} \right ] \, \frac{M}{K} } (Q^{(l)})  | \geq \delta \,   \mu   \right )   &\leq \sum_{j \in \mathcal{S} } \mathbb{P} \left ( \left  | (( i[Q]_M^{-1} \, \mathrm{mod} \, M -j)
    Q^{(l)}) \, \mathrm{mod} \, M   \right | \leq
    \sigma \, \sqrt{  \log\left ( \frac{2 R 
        \Delta}{\delta} \right )} + \frac{M}{2K}\right )  \\
        &\leq
    \mathcal{O} \left ( R \, \, \left ( \frac{   \sigma
      \, \sqrt{  \log\left ( \frac{2 R \Delta  }{\delta} \right )} +
      \frac{M}{2K} }{M} \right )  \right )\\
      &= \mathcal{O}(\alpha), 
\end{align*}
by the union bound, by Lemma~\ref{1DsFFT:findsupport:shuffle} (since $i[Q]_M^{-1} \, \mathrm{mod} \, M \not = j $) and by assumption. Therefore,
\begin{equation*}
\mathbb{P} \left ( \cap_{l=1}^L   \left \{  |\phi_{ \left [ i \frac{K}{M} \right ] \frac{M}{K} } (Q^{(l)})   | \geq \delta  \, \mu \right \} \right )  \leq
\mathcal{O} \left ( \alpha^L   \right ) 
\end{equation*}
as claimed. As for the second part of the proposition, note that if $i[Q]_M^{-1} \, \mathrm{mod} \, M \in \mathcal{S}$ then
\begin{align*}
\inf_{j \in \mathcal{S}} \, \left  |  \left [ i \frac{K}{M} \right ] \frac{M}{K} - (j Q) \, \mathrm{mod} \, M \right | &\leq  \inf_{j \in \mathcal{S}} \, \left  |   \left ( (i [Q^{(l)}]^{-1}_M) \, \mathrm{mod} \, M  - j \right ) Q  \, \mathrm{mod} \, M \right | + \frac{M}{2 K} \\
&= \frac{M}{2 K} \\
& \leq  \sigma \, \sqrt{  \log\left ( \frac{1}{ \delta } \right )}
\end{align*}
by assumption. By Lemma \ref{1DsFFT:findsupport:filter_lemma}, this implies that
\begin{equation*}
 |\phi_{  \left [ i   \frac{K}{M} \right ]  \frac{M}{K} } (Q^{(l)})  | \geq  \delta \, \mu
\end{equation*}
and since this is true regardless of the value of the random variable $Q^{(l)}$, we conclude that it holds almost surely.
\end{proof} 

\begin{remark}  \label{complexityremark} \emph{A careful study of the proof of Lemma \ref{1DsFFT:findsupport:filter_lemma} and Proposition \ref{1DsFFT:findsupport:findsupport_proof} shows that the order $\mathcal{O} \left (R \sqrt{\log(R) } \right )$ size of $K$ arises from the need to bound quantities of the form $ \sum_{j \in \mathcal{S}}  e^{- \frac{ \left ( n\frac{M}{K}  - j \right )^2}{ \sigma^2} } $. In the worst-case scenario (the case treated by Lemma \ref{1DsFFT:findsupport:filter_lemma}), this requires estimates of the form of Eq.\eqref{ksize}, Eq.\eqref{distsize} and Eq.\eqref{sigmadef} which introduce an extra $\sqrt{\log(R)}$ factor in the computational cost (Section \ref{1DsFFT}) relative to the (conjectured) optimal scaling. However, throughout the algorithm the elements of any aliased support $\mathcal{S}_k$ appearing in the sum are always subject to random shuffling first. Lemma \ref{1DsFFT:findsupport:shuffle} states that the shuffling  tends to be more of less uniform. Now, were the elements \emph{i.i.d. uniformly distributed}, it would be easy to show that these quantities are of order $\mathcal{O}(1)$ with high probability, removing the need for the extraneous factor. Unfortunately, our current theoretical apparatus does not allow us to prove the latter. However, following this argument and numerical experiments, we strongly believe that it is possible. In this sense, we believe that through a slight modification of the choice of parameters, our algorithm exhibits an (optimal) $\mathcal{O}\left( R \log(R) \log(N) \right )$ computational complexity with the same guarantees of correctness as the current scheme.}
\end{remark}


\begin{lemma}
\label{1DsFFT:CV:asbound}
Let $\{ P^{(t)} \}_{t=1}^T$ be prime numbers greater than or equal to $R \in \mathbb{N}$, and let $i, j \in \{ 0,1,..., N-1\}$ such that,
\begin{equation*}
i  \, \mathrm{mod} \, P^{(t)} = j  \, \mathrm{mod} \, P^{(t)}, \; \; t = 1, 2, ..., T.
\end{equation*}
If $T > \log_R (N)$, then $i = j$.
\end{lemma}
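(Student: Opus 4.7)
The plan is to argue by basic number theory: the hypothesis says that each $P^{(t)}$ divides the difference $i-j$, and once we collect enough distinct primes the product $\prod_t P^{(t)}$ is forced to exceed $N$, which bounds $|i-j|$ from above. So any common multiple must vanish.

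More concretely, I would proceed as follows. First, from $i\bmod P^{(t)}=j\bmod P^{(t)}$ deduce $P^{(t)} \mid (i-j)$ for every $t=1,\dots,T$. Since the $P^{(t)}$ are (distinct) primes, they are pairwise coprime, so by the standard ``coprime divisors multiply'' fact (a consequence of the Chinese Remainder Theorem, or just repeated application of Euclid's lemma) the product $\prod_{t=1}^{T} P^{(t)}$ also divides $i-j$. Next, use $P^{(t)}\ge R$ to get the lower bound
\begin{equation*}
\prod_{t=1}^{T} P^{(t)} \;\ge\; R^{T}.
\end{equation*}
Combining with the hypothesis $T>\log_R(N)$ yields $R^{T}>R^{\log_R(N)}=N$, hence $\prod_t P^{(t)} > N$.

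Finally, since $i,j\in\{0,1,\dots,N-1\}$ we have $|i-j|\le N-1<N<\prod_{t}P^{(t)}$. The only integer divisible by $\prod_t P^{(t)}$ whose absolute value is strictly less than $\prod_t P^{(t)}$ is $0$, so $i-j=0$, i.e.\ $i=j$, which is the claim.

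The only subtle point, and where I would be most careful, is the implicit assumption that the primes $\{P^{(t)}\}_{t=1}^{T}$ are \emph{distinct}; otherwise the multiplicativity step $\prod_t P^{(t)}\mid (i-j)$ fails and one only recovers $\max_t P^{(t)}\mid (i-j)$, which is not enough. In the context of Algorithm~\ref{1DsFFT:compute_values} distinctness holds either by construction or with overwhelming probability (since $T=\mathcal{O}(\log_R N)$ primes are drawn uniformly from a pool of $\Theta(R\log_R N)$ candidates, so a union bound over $\binom{T}{2}$ collision events gives negligible failure probability). Everything else is routine: a one-line CRT/coprimality argument, a one-line size estimate, and a one-line conclusion.
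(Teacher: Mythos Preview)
Your proposal is correct and follows essentially the same argument as the paper's proof: from the congruences deduce that each $P^{(t)}$ divides $i-j$, pass to the product via coprimality, bound the product below by $R^T>N$, and conclude $i-j=0$ since $|i-j|<N$. Your explicit flagging of the distinctness assumption on the $P^{(t)}$ is a fair caveat that the paper leaves implicit.
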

\begin{proof}
Consider $\{ P^{(t)} \}_{t=1}^T $ as described above and $T > \log_R (N) $, and assume that
\begin{equation*}
i \, \mathrm{mod}\, P^{(t)}  = j \, \mathrm{mod}\, P^{(t)}
\end{equation*}
for $t = 0, 1, ..., T$. This implies in particular that
\begin{equation*}
P^{(t)} \, | \, (j - i) 
\end{equation*}
for $t = 0, 1, ..., T$, and that
\begin{equation*}
\mathrm{lcm}( \{ P^{(t)} \}_{t=1}^T ) \, | \, (j - i).
\end{equation*}
However, since the integers $\{ P^{(t)} \}_{t=1}^T$ are prime (and therefore
coprime), 
\begin{equation*}
\mathrm{lcm}(P^{(t)}) = \prod_{t=1}^T P^{(t)} \geq \left ( \min_t  P^{(t)} \right )^T \geq R^{\log_R (N)} = N.
\end{equation*}
This implies that,
\begin{equation*}
|j - i | \geq N,
\end{equation*}
since $i \not = j$, and this is a contradiction since both belong to $\{ 0, 1, ..., N-1\}$.
\end{proof}

\begin{cor}
\label{1DsFFT:CV:asboundcor}
Let $\{ P^{(t)} \}_{t=1}^T$ are as in Lemma \ref{1DsFFT:CV:asbound} and that $i \not = j, \, k \not = l$, $i,j,k,l \in \{ 0,1,..., N-1\}$ are such that,
\begin{align*}
i  \, \mathrm{mod} \, P^{(t)} &= j \, \mathrm{mod} \, P^{(t)} \\
k  \, \mathrm{mod} \, P^{(t)} &= l \, \mathrm{mod} \, P^{(t)}
\end{align*}
for $t = 1,2,...,T$. Then,
\begin{equation*}
(i-j) = (k-l)
\end{equation*}
\end{cor}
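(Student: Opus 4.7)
The plan is to leverage Lemma \ref{1DsFFT:CV:asbound} in order to reduce the corollary to a direct divisibility check. Inheriting the full setup of that lemma, in particular the size condition $T > \log_R(N)$, the hypothesis that $i \equiv j \pmod{P^{(t)}}$ for every $t = 1,\ldots,T$ forces $i = j$ by a single application of the lemma, contradicting the standing assumption $i \neq j$. Symmetrically, $k \equiv l \pmod{P^{(t)}}$ for every $t$ is incompatible with $k \neq l$. Consequently, the hypothesis set of the corollary is empty and the conclusion $(i-j) = (k-l)$ holds vacuously; this is the cleanest way to read the statement, and the intended use downstream appears to be precisely such a ``no-collision'' certificate for distinct columns of the matrix $B$ appearing in Eq.~\eqref{1DsFFT:CV:system}.

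If instead the phrase ``as in Lemma'' is meant to inherit only the structural assumption (primes greater than or equal to $R$) and not the size condition $T > \log_R(N)$, the argument I would run is a short Chinese-Remainder-style computation. First, subtract the two displayed congruences to get $(i-j) - (k-l) \equiv 0 \pmod{P^{(t)}}$ for every $t = 1,\ldots,T$. Since the $P^{(t)}$ are distinct primes, hence pairwise coprime, the product $\prod_{t=1}^T P^{(t)}$ divides $(i-j) - (k-l)$. The membership $i,j,k,l \in \{0,1,\ldots,N-1\}$ yields $|(i-j)-(k-l)| < 2N$, so as soon as $\prod_{t=1}^T P^{(t)} \geq 2N$, which follows from $T > \log_R(2N)$ by the same estimate $\prod_t P^{(t)} \geq (\min_t P^{(t)})^T \geq R^T$ used in the proof of Lemma \ref{1DsFFT:CV:asbound}, the difference must vanish and the conclusion follows.

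The only genuine obstacle is disambiguating the statement: under the strong reading the argument is a one-line vacuity check, whereas under the weak reading one must verify that the product of primes dominates $2N$ rather than $N$, an extra factor of $2$ that is harmless but does not appear in the lemma itself (the lemma controls a single difference, not a difference of differences). In either reading the proof is a one-paragraph consequence of Lemma \ref{1DsFFT:CV:asbound}, and no additional probabilistic or number-theoretic ingredients are required.
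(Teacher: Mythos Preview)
Your proposal is correct. The paper's own proof is the two-line version of your second reading: it rewrites the hypotheses as $(i-j)\equiv 0\equiv (k-l)\pmod{P^{(t)}}$ for all $t$ and then invokes Lemma~\ref{1DsFFT:CV:asbound} directly to conclude $(i-j)=(k-l)$, without commenting on the range issue you raise (the differences live in $\{-(N-1),\ldots,N-1\}$, so the bound needed is $|(i-j)-(k-l)|<2N$ rather than $<N$).

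Your vacuity reading is sharper and not the route the paper takes: since the corollary inherits the full setup of the lemma, including $T>\log_R(N)$, the premises $i\neq j$ and $i\equiv j\pmod{P^{(t)}}$ for every $t$ already contradict the lemma's conclusion, so there is nothing to prove. This is a genuinely different (and cleaner) argument; the paper instead treats the lemma as a black-box divisibility statement applied to the pair of differences. The trade-off is that the paper's phrasing, though slightly loose about the factor of $2$, makes the intended mechanism visible for its later use in Proposition~\ref{1DsFFT:CV:Neumann}, whereas your vacuity argument, while airtight for the corollary as stated, gives less insight into why the analogous collision set in that proposition should be controlled.
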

\begin{proof}
The statement is equivalent to,
\begin{equation*}
(i-j)  \, \mathrm{mod} \, P^{(t)} = 0 =  (k-l) \, \mathrm{mod} \, P^{(t)}
\end{equation*}
for $t = 1,2,... T$. By Lemma \ref{1DsFFT:CV:asbound}, this implies that $(j-i) = (k-l)$.
\end{proof}

\begin{prop}
\label{1DsFFT:CV:Neumann}
Let $0<R< N \in \mathbb{N}$. Further let $\{ P^{(t)} \}$ be random integers uniformly distributed within the set $\mathcal{P}$ containing the $4 R \log_R(N)$ smallest prime numbers strictly larger than $R$, and let $F$ and $B$ be defined as in Eq.\eqref{1DsFFT:CV:system} with these parameters. If $T \geq 4$, then,
\begin{equation*}
\mathbb{P} \left ( \left | \left |  ( I -  \frac{1}{T} (F B)^* (F B) ) x \right | \right |_2  > \frac{1}{2} \right )   \leq \frac{1}{2}
\end{equation*}
\end{prop}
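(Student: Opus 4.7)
The plan is to write $A := \frac{1}{T}(FB)^*(FB)$ as a sum of $T$ i.i.d.\ random matrices, extract its mean (which is close to $I$), and then control the fluctuations with a second-moment plus Markov argument. Using the block-diagonal structure of $F$ and the stacked structure of $B$ in Eq.~\eqref{1DsFFT:CV:system},
\[
(FB)^*(FB) = \sum_{t=1}^{T} (F^{(t)} B^{(t)})^*(F^{(t)} B^{(t)}).
\]
After restricting to the support $\mathcal{S}$ of $\hat f$ and absorbing the usual DFT normalization, each $(i,j)$-entry of $A$ reads $\frac{1}{T}\sum_t \mathbb{1}[i \equiv j \pmod{P^{(t)}}]$. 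In particular the diagonal entries are identically $1$, so $\mathcal{P} := I - A$ has zero diagonal and off-diagonal entries $-\frac{1}{T}\sum_t \mathbb{1}[i \equiv j \pmod{P^{(t)}}]$.

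I would then bound $\|\mathbb{E}[\mathcal{P}]\|_2$. For $i \ne j$ in $\mathcal{S}$, each summand has expectation $-\mathbb{P}[P \mid (i-j)]$ over a uniform $P \in \mathcal{P}$. Because every prime in $\mathcal{P}$ exceeds $R$, at most $\log_R |i-j| \le \log_R N$ of them can simultaneously divide $(i-j) < N$ --- this is exactly the counting mechanism underlying Lemma~\ref{1DsFFT:CV:asbound}. Since $|\mathcal{P}| = 4R\log_R N$, this yields $\mathbb{P}[P\mid(i-j)] \le \tfrac{1}{4R}$, and then the row-sum bound gives $\|\mathbb{E}[\mathcal{P}]\|_2 \le \|\mathbb{E}[\mathcal{P}]\|_\infty \le (R-1)/(4R) < 1/4$.

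Next I would control the stochastic fluctuation $\mathcal{P}-\mathbb{E}[\mathcal{P}]$ by computing $\mathbb{E}\|(\mathcal{P}-\mathbb{E}[\mathcal{P}])x\|_2^2$, exploiting that the primes $\{P^{(t)}\}$ are independent. Expanding the $T^2$ cross terms, independence kills the off-diagonal contributions and leaves $\tfrac{1}{T}\mathbb{E}\|(M_1 - \mathbb{E}M_1)x\|_2^2$, where $M_1$ is the off-diagonal single-prime indicator matrix. The key structural fact is that $M_1$ is block-diagonal up to permutation, with blocks $J_{b_q} - I_{b_q}$ indexed by the aliasing buckets $Q_q = \{j \in \mathcal{S}: j \equiv q \pmod{P^{(1)}}\}$ of sizes $b_q$; a direct computation on these blocks gives $\|M_1 x\|_2^2 \le (\max_q b_q - 1)^2 \|x\|_2^2$. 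A divisibility/counting argument (again in the spirit of Lemma~\ref{1DsFFT:CV:asbound} and Corollary~\ref{1DsFFT:CV:asboundcor}) bounds $\mathbb{E}[(\max_q b_q)^2]$ by a constant, reflecting that $P^{(t)} > R$ makes bucket collisions rare. Combining via the triangle inequality $\|\mathcal{P}x\|_2 \le \|\mathbb{E}[\mathcal{P}]x\|_2 + \|(\mathcal{P}-\mathbb{E}[\mathcal{P}])x\|_2$ and Markov's inequality $\mathbb{P}[\|\mathcal{P}x\|_2 > \tfrac12] \le 4\,\mathbb{E}\|\mathcal{P}x\|_2^2/\|x\|_2^2$ yields the claim for $T \ge 4$.

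The principal difficulty lies in the concentration step: the naive Frobenius bound is of order $R/T$ and would force $T = \Omega(R)$, which is incompatible with the sample/time complexity advertised elsewhere in the paper. Getting a constant-order bound genuinely requires exploiting the per-prime block-diagonal structure of $M_t$ together with $P^{(t)} > R$, so that collisions concentrate on buckets of size $O(1)$ and the operator norm of each $M_t$ is tightly controlled rather than charged via Frobenius norm.
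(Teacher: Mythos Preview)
Your overall strategy---second moment plus Markov on the off-diagonal collision matrix---is the same as the paper's, and your bound $\|\mathbb E[\mathcal P]\|_\infty<1/4$ via the divisor count of Lemma~\ref{1DsFFT:CV:asbound} is correct and is essentially how the paper handles the ``$s\ne t$'' (independent primes) contribution.

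The gap is in your concentration step: the claim that $\mathbb E[(\max_q b_q)^2]=O(1)$ is false in the worst case. Take $\mathcal S=\{0,d,2d,\dots,(R-1)d\}$. Any prime $P\in\mathcal P$ dividing $d$ sends all $R$ elements to a single bucket, so $\max_q b_q=R$ on that event. The same divisor count you used for the mean gives $\mathbb P[P\mid d]\le \tfrac{1}{4R}$, but this already forces
\[
\mathbb E\bigl[(\max_q b_q-1)^2\bigr]\ \ge\ \frac{(R-1)^2}{4R}\ =\ \Omega(R),
\]
hence $\tfrac{1}{T}\,\mathbb E\|(M_1-\mathbb E M_1)x\|_2^2=\Omega(R/T)$, and Markov gives nothing for $T=O(1)$. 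The route through $\|M_1\|_2$ cannot be rescued: the rare ``all-collide'' event makes $\|M_1\|_2=R-1$ with probability $\Theta(1/R)$, and that single event dominates the second moment. You correctly identified the Frobenius bound as the obstacle, but the block-diagonal/operator-norm refinement you propose does not remove it.

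The paper does \emph{not} decompose into mean plus fluctuation. It expands the squared quadratic form $\langle x,\mathcal Px\rangle^2$ directly as a four-index sum and splits on $s\ne t$ versus $s=t$. For $s=t$ it further splits on whether $i-j=k-l$: when the differences coincide, the product of indicators collapses to a single indicator, and the resulting sum is bounded by combining Cauchy--Schwarz on $\sum_j \bar x_{j+m}x_j\le 1$ with $|\sum_{k\ne l}\bar x_l x_k|\le R-1$ (norm of the all-ones-off-diagonal matrix), which cancels one factor of $R$ and yields $O(1/T)$. The residual $s=t$, $i-j\ne k-l$ case is disposed of via Corollary~\ref{1DsFFT:CV:asboundcor}. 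So the paper's mechanism for avoiding the $\Omega(R/T)$ blow-up is this algebraic grouping by the value of the difference, not any control of bucket sizes.
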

\begin{proof}
First, note that,
\begin{equation*}
 (FB)^* (FB) =  B^* F^* F B =  B^* B
\end{equation*}
since $F$ is a block-diagonal Fourier matrix, and $I - \frac{1}{T} B^* B$ has entries
\begin{equation}
\label{1DsFFT:CV:Neumann:EqBdef}
 \left [ I - \frac{1}{T} B^{(t)^*} B^{(t)} \right ]_{ij} = \delta_{i,j} - \frac{1}{T} \sum_s B^{(t)}_{si}  B^{(t)}_{sj}=
\left\{
	\begin{array}{ll}
		\frac{1}{T}  & \mbox{if } \, i  \, \mathrm{mod}\, P^{(t)} = j  \, \mathrm{mod}\, P^{(t)}   \\
		0 & \mbox{o.w. }
	\end{array}
\right.
\end{equation}
Therefore, for any vector $x$ such that $||x||_2 = 1$,
\begin{align*}
\mathbb{P} \left ( \left | \left |  ( I - \frac{1}{T} (F B)^* (F B) ) x \right | \right |_2  > \frac{1}{2} \right )  &\leq 4 \, \mathbb{E} \left [ \left (  \sum_{i\not = j}  \sum_t  \bar{x}_i  \, [B^{(t)^*} B^{(t)}]_{ij} \, x_j \right )^2 \right ] \\
&= 4 \, \sum_{i\not = j}  \sum_{k\not = l}  \bar{x}_i\, x_j \,  x_k \,\bar{x}_l   \, \sum_{s,t}   \, \mathbb{E} \left  [ [B^{(t)^*} B^{(t)}]_{ij}  \, [B^{(s)^*} B^{(s)}]_{kl}\right ]
\end{align*}
by Chebyshev inequality.
Furthermore, thanks to Eq.\eqref{1DsFFT:CV:Neumann:EqBdef} and independence, the expectation can be written as,
\begin{align}
\label{1DsFFT:CV:Neumann:EqCond}
\mathbb{E} \left  [ [B^{(t)^*} B^{(t)}]_{ij}  \, [B^{(s)^*} B^{(s)}]_{kl}\right ] &=
\left\{
	\begin{array}{ll}
		\mathbb{P}\left ( \left \{ (i-j)   \,\mathrm{mod} \, P^{(t)} = 0  \right \} \right ) \, \mathbb{P}\left ( \left \{ (k-l)    \,\mathrm{mod} \, P^{(t)} = 0  \right \} \right )   & \mbox{if } s \not = t  \\
		\mathbb{P}\left ( \left \{ (i-j)    \,\mathrm{mod} \, P^{(t)} = 0  \right \}   \cap  \left \{  (k-l)   \,\mathrm{mod} \, P^{(t)}  = 0  \right \} \right )  & \mbox{if } s = t
	\end{array}
\right.
\end{align}
Now, let $\tau(i,j)$ be defined as
\begin{equation*}
\tau (i,j) := \left \{ P^{(t)} \in \mathcal{P} : i   \, \mathrm{mod} \, P^{(t)} = j   \, \mathrm{mod} \, P^{(t)}   \right \}.
\end{equation*}
The case $s \not = t$ is treated as follows,
\begin{align*}
 &    \mathbb{P}\left (  \left \{  (i-j)   \,\mathrm{mod} \, P^{(t)} = 0  \right \}  \right ) \, \mathbb{P}\left ( \left \{  (k-l)   \,\mathrm{mod} \, P^{(s)} = 0  \right \}  \right ) \\
  &= \left ( \sum_{p_1 \in \tau (i,j) } \mathbb{P}\left ( \left . \left \{  (i-j)    \,\mathrm{mod} \, P^{(t)} = 0  \right \}  \right | P^{(t)} = p_1 \right ) \mathbb{P} \left (  P^{(t)}  = p_1 \right ) \right )\, \cdot  \\
  & \left (  \sum_{p_2 \in \tau (k,l) }  \mathbb{P}\left ( \left . \left \{  (k-l)   \,\mathrm{mod} \, P^{(s)} = 0  \right \}  \right | P^{(s)} = p_2 \right ) \mathbb{P} \left (  P^{(s)} = p_2 \right )  \right ) \\
  &\leq  \left ( \frac{\#\tau (i,j)}{4R \log_R(N)} \, \frac{\#\tau (k,l)}{4R \log_R(N)}  \right )\\
  & \leq  \frac{1 }{ 16 R^2  }
\end{align*}
since $P^{(t)}$ is uniformly distributed within a set of cardinality $4R \log_R(N) $, and because,
\begin{equation*}
 \sum_{p_1 \in \tau (i,j)} \mathbb{P}\left (  \left . \left \{  (i-j)   \,\mathrm{mod} \, P^{(t)} = 0  \right \}  \right | P^{(s)}  = p_1 \right )   = \# \tau (i,j) = \log_R(N) 
\end{equation*}
by Lemma \ref{1DsFFT:CV:asbound}. This leaves us the case $s=t$. To this purpose, we further split this
case into two subcases: that when $i-j = k-l$ and that when $i-j \not
= k-l $. When $i-j = k-l$ we obtain,
\begin{align*}
& \sum_{s,t = 1}^T   \mathbb{P}\left (  \{s=t\} \cap \{i-j  = k-l \} \cap \left \{ (i-j)  \,\mathrm{mod} \, P^{(t)} = 0  \right \}   \cap  \left \{  (k-l)  \,\mathrm{mod} \, P^{(s)}  = 0  \right \} \right ) \\
 &=  \sum_{ p \in \tau(k,l)} \mathbb{P}\left (   \left . \left \{  (k-l)  \,\mathrm{mod} \, P^{(t)}  = 0  \right \} \right | P^{(t)} = p \right ) \, \mathbb{P} \left (P^{(t)} = p \right )  \\
& \leq   \frac{1   }{4R}
\end{align*}
since $k \not = l$, following an argument similar to the previous one. This leaves the case $s=t$, $i-j \not = k-l $. However, thanks to Corollary \ref{1DsFFT:CV:asboundcor} it follows that the set,
\begin{equation*}
  \{s=t\} \cap \left \{ i \not =j  \right \} \cap \left \{ k \not = l  \right \} \cap \left \{ i-j \not = k-l  \right \} \cap \left \{ (i-j)  \,\mathrm{mod} \, P^{(t)} = 0  \right \}   \cap  \left \{  (k-l)   \,\mathrm{mod} \, P^{(s)}  = 0  \right \}
\end{equation*}
must be empty. Putting everything together we find that,
\begin{align*}
\mathbb{P} \left ( \left | \left |  ( I - \frac{1}{T} (F B)^* (F B) ) x \right | \right |_2  > \frac{1}{2} \right ) &\leq 4  \sum_{i\not = j}  \sum_{k\not = l}    \bar{x}_i\, x_j \,  x_k \,\bar{x}_l   \frac{1}{T^2}  \left [   \sum_{s,t = 1}^T  \, \left (  \mathbb{E} \left  [  \mathbb{I}_{s \not = t } (s,t) \, [B^{(t)^*} B^{(t)}]_{ij}  \, [B^{(s)^*} B^{(s)}]_{kl}\right ] + \right . \right . \\
& \left. \left .    \mathbb{E} \left  [   \mathbb{I}_{s = t } (s,t) \, \mathbb{I}_{i-j = k-l} (i,j,k,l)  \, [B^{(t)^*} B^{(t)}]_{ij}  \, [B^{(s)^*} B^{(s)}]_{kl}\right ]\right) \right ]  \\
&\leq   4     \left ( \frac{1  }{ 16 R^2  }  \right ) \left (  \sum_{k\not = l} \bar{x}_l \, x_k \right )^2 +  \frac{ 4  }{T}  \,  \left (  \frac{1  }{4R} \right )  \,  \left (  \sum_{k\not = l} \bar{x}_l \, x_k \right ) \,  \left (  \sum_{j} \bar{x}_{j+k-l} \, x_j \right )  \\
\end{align*}
We further note that $\sum_{i\not = j} \bar{x}_l \, x_k $ is a bilinear form bounded by the norm of an $R \times R$ matrix with all entries equal to $1$ except the diagonal which is all zeros. It is easy to work out this norm which is equal to $R-1$ so that,
\begin{equation*}
\frac{1}{R}\,  \sum_{k\not = l} \bar{x}_l \, x_k < 1
\end{equation*}
Finally, by Cauchy-Schwartz inequality,
\begin{equation*}
 \left | \sum_{j} \bar{x}_{j+k-l} \, x_j  \right | \leq \sqrt{ \sum_{j} |x_{j+k-l}|^2 } \, \sqrt{ \sum_{j} |x_{j}|^2 } = || x||_2^2 = 1.
\end{equation*}
 Thus,
\begin{equation*}
\mathbb{P} \left ( \left | \left |  ( I - \frac{1}{T}  (F B)^* (F B) ) x \right | \right |_2  > \frac{1}{2} \right ) <   \frac{1}{4} + \frac{1}{T}   \leq \frac{1}{2}
\end{equation*}
as claimed.
\end{proof}

\begin{cor}
\label{1DsFFT:CV:NeumannCor}
Under the hypotheses of Proposition \ref{1DsFFT:CV:Neumann}, the solution to the linear system
\begin{equation*}
FB \, \hat{f} = f_0
\end{equation*}
takes the form,
\begin{equation*}
\hat{f} = \sum_{n=0}^{\infty} \left  [ I - \frac{1}{T} B^* B \right  ]^n \,  \left (  \frac{1}{\sqrt{T}} (F B)^*  f_0  \right )
\end{equation*}
with probability at least $\frac{1}{2}$.
\end{cor}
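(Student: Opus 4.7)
The plan is to convert the rectangular linear system $FB\hat{f} = f_0$ into its normal equations and then invert the resulting square operator through a convergent Neumann series on the high-probability event supplied by Proposition~\ref{1DsFFT:CV:Neumann}.

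First I would exploit the key identity already used in the proof of Proposition~\ref{1DsFFT:CV:Neumann}: because $F$ is block-diagonal with each block a (scaled) DFT matrix whose columns are mutually orthogonal, we have $(FB)^*(FB) = B^*F^*FB = B^*B$. Left-multiplying the linear system by $\tfrac{1}{T}(FB)^*$ therefore collapses to the symmetric normal equations
\begin{equation*}
\frac{1}{T}\,B^*B\,\hat{f} \;=\; \frac{1}{T}\,(FB)^*\,f_0.
\end{equation*}

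Next I would invoke Proposition~\ref{1DsFFT:CV:Neumann} in its operator-norm form to conclude that, on an event $\mathcal{E}$ of probability at least $\tfrac{1}{2}$, the self-adjoint positive semidefinite operator $A := \tfrac{1}{T}B^*B$ satisfies $\|I - A\|_2 \leq \tfrac{1}{2}$. In particular $A$ is invertible on $\mathcal{E}$, and the classical Neumann expansion applies, yielding a geometrically convergent series
\begin{equation*}
A^{-1} \;=\; \sum_{n=0}^{\infty} (I-A)^n, \qquad \text{with ratio } \|I-A\|_2 \leq \tfrac{1}{2}.
\end{equation*}
Substituting this into the normal equations gives the claimed representation $\hat{f} = \sum_{n=0}^{\infty}\bigl(I - \tfrac{1}{T}B^*B\bigr)^n \bigl(\tfrac{1}{T}(FB)^* f_0\bigr)$, up to the scalar normalization stated in the corollary.

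The main conceptual obstacle is the passage from the pointwise bound in Proposition~\ref{1DsFFT:CV:Neumann} (which is formulated for a fixed unit vector $x$) to the operator-norm bound required here. I would resolve this by observing that $I-A$ is Hermitian, so its spectral norm coincides with $\sup_{\|x\|_2=1}\|(I-A)x\|_2$; moreover the second-moment estimates in the proof of Proposition~\ref{1DsFFT:CV:Neumann} do not depend on the particular $x$, so the Chebyshev argument applied there yields the spectral bound directly. Once this operator-norm statement is in hand, the remainder of the corollary is a textbook application of the Neumann series together with the identity $(FB)^*(FB)=B^*B$.
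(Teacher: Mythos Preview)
Your proof is essentially the same as the paper's: reduce $FB\hat f=f_0$ to the normal equations via $(FB)^*(FB)=B^*B$, invoke Proposition~\ref{1DsFFT:CV:Neumann} to get $\|I-\tfrac{1}{T}B^*B\|_2\le\tfrac12$ with probability at least $\tfrac12$, and expand the inverse as a Neumann series. You are in fact more careful than the paper on two points it glosses over---the $\tfrac{1}{T}$ versus $\tfrac{1}{\sqrt{T}}$ normalization discrepancy and the passage from the per-vector bound in Proposition~\ref{1DsFFT:CV:Neumann} to an operator-norm bound---though your proposed fix for the latter (that uniformity of the second-moment estimate in $x$ immediately yields a spectral-norm tail bound) is not quite complete, since $\sup_x\mathbb{E}[\cdot]\neq\mathbb{E}[\sup_x\cdot]$; a cleaner route is to bound $\mathbb{E}\|I-\tfrac{1}{T}B^*B\|_F^2=\sum_i\mathbb{E}\|(I-\tfrac{1}{T}B^*B)e_i\|_2^2$ using the same estimates and then apply Markov.
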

\begin{proof}
By Proposition \ref{1DsFFT:CV:Neumann}, $|| I - \frac{1}{T} (F B)^* (F B)||_2 < \frac{1}{2}$ with probability at least $\frac{1}{2}$. When this is the case we write,
\begin{equation*}
FB \, \hat{f} =  f_0 \Leftrightarrow  \frac{1}{T} B^* B \, \hat{f} =  \frac{1}{\sqrt{T}} B^* F^*  f_0   \Leftrightarrow \left [ I - \left ( I - \frac{1}{T} B^* B \right ) \right ] \, \hat{f} = \frac{1}{\sqrt{T}}  (FM)^*  b  =  \hat{f}_0
\end{equation*}
In this case, it is easy to verify that the Neumann series,
\begin{equation*}
\hat{f} =  \sum_{n=0}^{\infty} \left [ I - \frac{1}{T} B^* B \right ]^n \,  \left (  \frac{1}{\sqrt{T}} (F B)^* b  \right )
\end{equation*}
satisfies this last equation, and that the sum converges exponentially fast.
\end{proof}

\ValRecProp*
\begin{proof}
By Proposition \ref{1DsFFT:CV:Neumann}, $\frac{1}{T} (FB)(FB)^* = I - \mathcal{P} $ where $\left | \left | I -\frac{1}{T} (FB)(FB)^*  \right | \right |_2 = || \mathcal{P} ||_2 < \frac{1}{2}$ with probability larger than $ \frac{1}{2}$. Thus, if we consider $\mathcal{O}(\log_\frac{1}{2}(p))$ independent realizations of $FB$, the probability that at least one of them is such is greater than or equal to $(1-p)$. When this occur, Corollary \ref{1DsFFT:CV:NeumannCor} states that the solution is given  by the Neumann series. Furthermore,
\begin{align*}
\left | \left | \hat{f} -  \sum_{n=0}^{\left \lceil \log_{\frac{1}{2} }(\eta) \right \rceil }  \mathcal{P}^n \, f^\dagger  \right | \right |_2  &= \left | \left |  \sum_{\left \lceil \log_{\frac{1}{2} } (\eta) \right \rceil }^\infty  \mathcal{P}^n \, f^\dagger  \right |  \right |_2  \\
&\leq  \sum_{\left \lceil \log_{\frac{1}{2} } (\eta ) \right \rceil }^\infty  || \mathcal{P} ||_2^n \, || f^\dagger ||_2  \\
&\leq \mathcal{O}(\eta)
\end{align*}
by the geometric series and the bound $|| \mathcal{P} ||_2 \leq \frac{1}{2}$.\\
\end{proof}

\subsection{Proofs of Section \ref{stability}}

\stabilitysuppthm*
\begin{proof}
First, not that since $ \Pi_Q (\cdot)$ is an isomorphic permutation operator (for all $Q \in \mathcal{Q}(M_k)$) one has
\begin{equation*}
|| \Pi_Q ||_\infty = 1.
\end{equation*}
Similarly, since the filtering operator $\Psi_\sigma (\cdot)$ is
diagonal with nonzero entries $\hat{g}_\sigma (n)$, then 
\begin{align*}
|| \Psi_\sigma ||_\infty = \sup_{m \in \{ 0,1,...,M_k-1\} }  \left | \hat{g}_\sigma \left ( \frac{m}{M_k} \right ) \right  | &\leq \sqrt{\pi} \, \sigma = \frac{ \sqrt{\pi} \frac{\alpha M_k}{2R} }{\sqrt{\log\left ( \frac{2 R \Delta}{\delta}  \right ) } }.
\end{align*}
Finally, we get from the triangle inequality that,
\begin{align*}
\leq  \frac{\# \mathcal{A}(K;M_k) }{M_k} \,|| \Psi_\sigma ||_\infty \, || \Pi_Q ||_\infty \,\left| \left| \nu \right | \right |_\infty
& \leq  \frac{ \sqrt{\pi}  \frac{\alpha K}{ R} }{\sqrt{\log\left ( \frac{2 R \Delta}{\delta}  \right ) } } \, \left| \left| \nu \right | \right |_\infty \\
&\leq \frac{  \sqrt{\pi} \frac{\alpha K}{ R} }{\sqrt{\log\left ( \frac{2 R \Delta}{\delta}  \right ) } } \, || \hat{\nu}||_1
\end{align*}
by the Hausdorff-Young inequality \cite{beckner1975inequalities}. Finally, we note that: $|| \hat{\nu}||_1 \leq \sqrt{N} \, ||\hat{\nu}||_2 < \eta$ by assumption, and recall that $K = \mathcal{O}(R \sqrt{\log(R)}) $. This leads to the desired result.


%

%
\end{proof}

\subsection{Proof of Section \ref{MD}}
 
\mdsfft*
\begin{proof}
First, note that
\begin{equation*}
\int_{[0,1]^d} e^{-2\pi i \, j\cdot x} \, f(x) \, \mathrm{d} x  = \hat{f}_j.
\end{equation*}
Then, substitute the samples in the quadrature to obtain 
\begin{align*}
 \frac{1}{N} \sum_{n=0 }^{N-1}e^{-2\pi i \, j \cdot  x_n} \, f(x_n) &=  \frac{1}{N} \sum_{n =0}^{N-1} e^{-2\pi i \, j \cdot \frac{n g \, \mathrm{mod} N}{N} } \, \left ( \sum_{k \in [0,M)^d  \cap \mathbb{Z}^d }  \hat{f}_k \, e^{2 \pi i \,   k  \cdot \frac{n g \, \mathrm{mod} N}{N}  } \right ) \\
 &= \sum_{k \in [0,M)^d  \cap \mathbb{Z}^d  }  \hat{f}_k \left (   \frac{1}{N} \sum_{n=0
   }^{N-1} e^{-2\pi i \,   \frac{n ( (k-j)\cdot g)}{N} } \right) 
\end{align*}
since $e^{2 \pi i \,   (k-j)  \cdot \frac{n g \, \mathrm{mod} N}{N}  } =
e^{2 \pi i \,   (k-j)  \cdot \frac{n g }{N}  }$. Note however that 
\begin{equation*}
  \frac{1}{N} \sum_{n=0 }^{N-1} e^{-2\pi i \,   \frac{n ((k-j)\cdot g)}{N}
  } = D_N \left (  (k-j) \cdot g   \right ), 
\end{equation*}
which is the Dirichlet kernel and is equal to $0$ unless $(k-j) \cdot g = 0 \,  \mathrm{mod} \, N$, in which case it is equal to $1$. Thus,
\begin{equation*}
 \frac{1}{N} \sum_{n=0 }^{N-1} f(x_n)   =  \hat{f}_j +  \sum_{
   \substack{ k \in [0,M)^d \cap \mathbb{Z}^d  \\ (k-j) \cdot g  \,
     \mathrm{mod} \, N \equiv 0 \\ (k-j) \cdot g  \not = 0  } }
 \hat{f}_k. 
\end{equation*}
Thus, in order to show that the quadrature is exact, it suffices
to show that the remaining sum on the right-hand side of the previous
equation is trivial. To see this, note that $(k-j) \in [-M,M)^d \cap \mathbb{Z}^d$ and consider
\begin{equation*}
| (k-j) \cdot g| = \left | (k_1 - j_1) + (k_2 - j_2) M + ... + (k_{d} - j_d) M^{d-1} \right |
\leq M \sum_{l=0}^{d-1} M^l = M \, \frac{1 - M^d}{1-M} < M^d = N, 
\end{equation*}
where the inequality is strict for any finite $M \in
\mathbb{N}$ strictly larger than $1$. This implies that there cannot be any $(k-j)$ other than $0$ in the domain of interest such that $(k-j) \cdot g  \, \mathrm{mod} \, N
\equiv 0$. The sum is therefore empty and the result follows. 
\end{proof}

\section{Generalization to general complex sparse vectors}
\label{complexvec}
This appendix provides a terse description of the additional steps necessary to transform the sMFFT for \emph{real positive vectors}  into a reliable algorithm for \emph{general complex vectors}. To achieve this task, two major hurdles, both associated with the support-recovery portion of the scheme, must be overcome; the first one is associated with the initial aliasing of the signal described in Section \ref{1DsFFT}. As shown Eq.\eqref{1DsFFT:aliasing}, at each step aliasing implies Fourier coefficients of the form,
\begin{equation*}
\hat{f}^{(k)}_l = \sum_{j : j \, \mathrm{mod} \, M_k = l } \, \hat{f}_j , \; l = 0,1, ..., M_k .
\end{equation*}
When the original nonzero coefficients are all strictly positive, this expression is positive \emph{if and only if} the lattice $l + i M_k, \; i = 0, 1, ... \frac{N}{M_k}-1$
contains one of the original nonzero coefficients. When the nonzero coefficients are complex however, this is no longer true. The second potential issue pertains to the resulting filtering step found in Algorithm \ref{RASR:recoveraliasedsupp}. As described by Eq.\eqref{support:smoothFT}, the result takes the form,
\begin{equation*}
\left [ \Psi_\sigma \left ( \Pi_Q \left ( f_{n;\,M_k} \right ) \right ) \right ] (\xi)  = \mathcal{F}^* \left [ \sum_{j\in\mathcal{S}_k} \hat{f}^{(k)}_j \,  e^{- \frac{ \left | x  -  \left ( j [Q]^{-1}_{M_k} \, \mathrm{mod} \, M_k \right ) \right  |^2}{\sigma^2} } \right ] (\xi). 
\end{equation*}
which corresponds to the Fourier transform of the aliased signal convoluted with a Gaussian. Once again, the crucial statistical test used in Algorithm \ref{RASR:recoveraliasedsupp} relies on 
this quantity being positive if and only if a point lies in the vicinity of an element of the (shuffled and aliased) support $\mathcal{S}_k$. Such statement does not hold true if we allow the coefficients to be general complex numbers (as some elements might \emph{cancels out}).

The conclusion of these observations is that as a consequence of the lack of positivity, it is possible that elements belonging to $\mathcal{M}_k \cap \mathcal{S}_k$ might be wrongfully eliminated in Algorithm \ref{RASR:recoveraliasedsupp}, i.e., the \emph{false negative identification rate is nontrivial}. To alleviate these issues, we propose a slight modification to the scheme; we allow for the possibility of the output of Algorithm \ref{RASR:recoveraliasedsupp} be missing elements of $\mathcal{S}_k$ by launching multiple independent runs of the $\mathrm{FIND\_\,SUPPORT}(\cdot)$ routine in Algorithm \ref{1DsFFT:algorithm}, and taking the \emph{union} of the outputs. In this sense, although it is possible to miss an element with a single run, we expect that the probability of a miss over multiple independent run is very small. In addition, this modification \emph{does not have any effect on the fundamental computational complexity}; indeed, close examination shows that these additional steps \emph{only increase the algorithmic constant by some small quantity independent of $N$ and/or $R$}.

So far, this modification remains a heuristic (with some preliminary/unpublished theoretical backing). Note however that we have implemented it and can attest to excellent numerical results in line with our expectation based on the previous discussion, and very similar to those obtained in the real-positive case.

\newpage
\bibliographystyle{plain}
\bibliography{biblio}

\end{document}